\newcommand{\A}{{\sf A}}
\newcommand{\dA}{{\sf A}'}
\newcommand{\B}{{\sf B}}
\newcommand{\dB}{{\sf B}'}
\newcommand{\hA}{\hat{\sf A}}
\newcommand{\dhA}{\hat{\sf A}'}
\newcommand{\hB}{\hat{\sf B}}
\newcommand{\dhB}{\hat{\sf B}'}
\newcommand{\F}{\mathcal{F}}
\newcommand{\ID}{{\sf \textsl{ID}}\xspace}
\newcommand{\OT}[1]{{\sf \textsl{OT}}\ensuremath{^{#1}}\xspace}
\newcommand{\eqq}{\stackrel{\raisebox{-1ex}{\tiny?}}{=}}
\newcommand{\approxq}{\stackrel{\text{\it\tiny q}}{\approx}}
\newcommand{\approxs}{\stackrel{\text{\it\tiny s}}{\approx}}
\newcommand{\set}[1]{\{#1\}}
\newcommand{\Set}[2]{\{ #1 : #2\}}
\newcommand{\eps}{\varepsilon}
\newcommand{\x}{\times}
\newcommand{\commit}{{\tt Commit}}
\newcommand{\pk}{{\tt pk}}
\newcommand{\sk}{{\tt sk}}
\newcommand{\pkH}{{\tt pkH}}
\newcommand{\pkB}{{\tt pkB}}
\newcommand{\GH}{{\cal G}_{\tt H}}
\newcommand{\GB}{{\cal G}_{\tt B}}
\newcommand{\compile}{{\cal C}^{\alpha}}
\newcommand{\Hoo}{H_{\infty}}
\newcommand{\Hmin}{H_{\infty}}
\newcommand{\Hmax}{H_0}
\newcommand{\MC}[3]{#1 \leftrightarrow #2 \leftrightarrow #3}
\newcommand{\ignore}[1]{}
\newcommand{\X}{\mathcal{X}}
\newcommand{\crs}{\omega}
\newcommand{\dBob}{\mathfrak{B}}
\newcommand{\dBobBQSM}{\dBob_{\text{\sc bqsm}}}
\newcommand{\dBobPoly}{\dBob_{\mathrm{poly}}}
\newcommand{\poly}{\mathit{poly}}
\newcommand{\qp}{\pi}
\newcommand{\cp}{\Sigma}
\newcommand{\nord}{{}}
\newcommand*{\assign}{\ensuremath{\kern.5ex\raisebox{.1ex}{\mbox{\rm:}}\kern -.3em =}}
\newcommand{\QID}{\mathtt{QID}}
\newcommand{\QOT}{{\tt 1\text{-}2 \, QOT^\ell}}
\newcommand{\test}{t\hspace{-0.15ex}e\hspace{-0.1ex}st}
\newcommand{\Test}{T\hspace{-0.3ex}e\hspace{-0.15ex}st}
\newcommand{\err}{e\hspace{-0.1ex}r\hspace{-0.15ex}r}
\newcommand{\AppendixOnOff}[1]{#1}
\newcommand{{ \input{ps/.pstex_t} }}[1]{{ \input{pdf/#1.pdftex_t} }}
\newcommand{{ \input{ps/.pstex_t} }}[1]{{ \input{ps/#1.pstex_t} }}
\begin{document}
\allowdisplaybreaks
\pagestyle{plain}

\title{Improving the Security of Quantum Protocols via Commit-and-Open}

\author{Ivan Damg{\aa}rd\inst{1} \and Serge Fehr\inst{2} \and Carolin Lunemann\inst{1} \and Louis
  Salvail\inst{3} \and Christian Schaffner\inst{2}}

\institute{
  DAIMI, Aarhus University, Denmark\\
  \email{\{ivan|carolin\}@cs.au.dk} \and
  Centrum Wiskunde \& Informatica (CWI) Amsterdam, The Netherlands\\
  \email{\{s.fehr|c.schaffner\}@cwi.nl} \and
  Universit\'e de Montr\'eal (DIRO), QC, Canada\\
  \email{salvail@iro.umontreal.ca} }

\maketitle

\begin{abstract}
We consider two-party quantum protocols starting with a transmission of
some random BB84 qubits followed by classical messages. We show a general
``compiler'' improving the security of such protocols: if the original protocol is
secure against an ``almost honest'' adversary, then the compiled protocol
is secure against an arbitrary computationally bounded (quantum) adversary.
The compilation preserves the number of qubits sent and the number of rounds up to a
constant factor.
The compiler also preserves security in the bounded-quantum-storage model (BQSM), 
so if the original protocol was BQSM-secure, the compiled protocol can only be broken 
by an adversary who
has large quantum memory {\em and} large computing power. This is in contrast to known BQSM-secure protocols, where security breaks down completely if the adversary has larger quantum memory than expected. We show how our technique can be applied to quantum identification and oblivious transfer protocols. \\[1ex]
\end{abstract}

\section{Introduction}
We consider two-party quantum protocols for mutually
distrusting players Alice and Bob.  Such protocols typically start by
Alice sending $n$ random BB84 qubits to Bob who is supposed to measure
them. Then some classical exchange of messages follows. Several
protocols following this pattern have been proposed, implementing
Oblivious Transfer (OT), 
Commitment, and Password-Based Identification
\cite{BBCS91,DFSS08,DFRSS07,DFSS07}.

In more details, the first step of the protocol consists of Alice
choosing random binary strings $x= x_1,...,x_n$ and $\theta =
\theta_1,...,\theta_n$. She then prepares $n$ particles where $x_i$ is
encoded in the state of the $i$'th particle using basis
$\theta_i$. Bob chooses a basis string $\hat{\theta} =
\hat{\theta}_1,..,\hat{\theta} _n$ and measures the $i$'th particle in
basis $\hat{\theta}_i$. If Bob plays honestly, he learns $x_i$
whenever $\hat{\theta}_i= \theta_i$ and else gets a random
independent result.

Protocols of the form we consider here are typically unconditionally
secure against cheating by Alice, but can (in their basic form) be
broken easily by Bob, if he does not measure the qubits
immediately. This is because the protocol typically asks Alice to
reveal $\theta$ at a later stage, and Bob can then measure the qubits
with $\hat{\theta} = \theta$ and learn more information than he was
supposed to.

In this paper, we show a general ``compiler'' that can be used to
improve security against such an attack. We assume that the original
protocol implements some two-party functionality $\F$ with statistical
security against Bob if he is {\em benign}, meaning that he treats the
qubits ``almost honestly'', a notion we make more precise below. Then
we show that the compiled protocol also implements $\F$, but now with
security against {\em any} computationally bounded (quantum) Bob (note
that we cannot in general obtain unconditional security against both
Alice and Bob, not even using quantum communication~\cite{Lo97}).  The
compiled protocol preserves unconditional security against Alice and
has the same number of transmitted qubits and rounds as the original
one up to a constant factor.
 
By benign behavior of Bob, we mean that after having received the
qubits, two conditions are satisfied: First, Bob's quantum storage is
essentially of size zero (note that it would be exactly zero if he had
measured the qubits).  Second, there exists a basis string
$\hat{\theta}$ such that the uncertainty about $x$ is essentially as
it would be if Bob had really measured in bases $\hat{\theta}$,
namely 1 bit for every position where $\hat{\theta} $ differs
from~$\theta$.

Thus, with our compiler, one can build a protocol for any two-party
functionality by designing a protocol that only has to be secure if
Bob is benign. We note that proofs for known protocols typically go
through under this assumption.  For instance, our compiler can easily
be applied to the quantum identification protocols of \cite{DFSS07}
and the OT protocol of \cite{BBCS91}.

The compiler is based on a computational assumption; namely we assume
the existence of a classical commitment scheme with some special
properties, similar to the commitment schemes used in \cite{DFS04} but
with an additional extraction property, secure against a quantum
adversary.  A good candidate is the cryptosystem of
Regev~\cite{Regev05}.  For efficiency, we use a common reference
string which allows us to use Regev's scheme in a simple way and,
since it is relatively efficient, we get a protocol that is
potentially practical.  It is possible to generate the reference
string from scratch, but this requires a more complicated non-constant
round protocol~\cite{DL09arxiv}.

The reader may ask whether it is really interesting to improve the
security of quantum protocols for classical tasks such as
identification or OT using a computational assumption. Perhaps it
would be a more practical approach to use the same assumption to build
{\em classical} protocols for the same tasks, secure against quantum
attacks?  To answer this, it is important to point out that our
compiler also preserves security in the bounded-quantum-storage model
(BQSM) \cite{DFSS05}, and this feature allows us to get security
properties that classical protocols cannot achieve. In the BQSM, one
assumes that Bob can only keep in his quantum memory
a limited number of qubits received
from Alice.  With current state of the art, it
is much easier to transmit and measure qubits than it is to store them
for a non-negligible time, suggesting that the BQSM and the
subsequently proposed noisy-quantum-storage model~\cite{WST08} are
reasonable. On the other hand, if the assumption fails and the
adversary can perfectly store all qubits sent, the known protocols can
be easily broken.  In contrast, by applying our compiler, one obtains
new protocols where the adversary must have large quantum storage {\em
  and} large computing power to break the protocol.%
\footnote{For the case of identification\cite{DFSS07}, the compiled
  protocol is not only secure against adversaries trying to
  impersonate Alice or Bob, but can also be made secure against
  man-in-the-middle attacks, where again the adversary must have large
  quantum storage and large computing power to break the protocol.}
\footnote{One may try to achieve the same security by combining one of
  the previous BQSM secure protocols with a computationally secure
  classical protocol, but it is not clear that this technique will
  work for all functionalities, and it would require independent key
  material for the two instances. For the case of password-based
  identification it would require users to have two passwords.}

The basic technique we use to construct the compiler was already
suggested in connection with the first quantum OT protocol
from~\cite{BBCS91}: we try to force Bob to measure by asking him to
commit (using a classical scheme) to all his basis choices and
measurement results, and open some of them later.  While classical
intuition suggests that the commitments should force Bob to measure
(almost) all the qubits, it has proved very tricky to show that the
approach really works against a quantum adversary.  In fact, it was
previously very unclear what exactly the commit-and-open approach
forces Bob to do.  Although some partial results for OT have been
shown~\cite{Yao95,CDMS04}, the original OT protocol from \cite{BBCS91}
has never been proved secure for a concrete unconditionally hiding
commitment scheme -- which is needed to maintain unconditional
security against Alice.  In this paper, we develop new quantum
information-theoretic tools (that may be of independent interest) to
characterize what commit-and-open achieves in general, namely it
forces Bob to be benign. This property allows us to apply the compiler
to any two-party functionality and in particular to show that the OT
from \cite{BBCS91} is indeed secure when using an appropriate
commitment scheme.

\section{Preliminaries}
We assume the reader to be familiar with the
basic notation and concepts of quantum information processing~\cite{NC00}.
In this paper, the computational or $+\,$-basis is defined by the pair
$\{ \ket{0}, \ket{1} \}$ (also written as $\{ \ket{0}_+, \ket{1}_+
\}$). The pair $\{ \ket{0}_\times, \ket{1}_\times \}$ denotes the
diagonal or $\times$-basis, where \smash{$\ket{0}_\times = (\ket{0} +
\ket{1}) / \sqrt{2}$} and \smash{$\ket{1}_\times = (\ket{0} - \ket{1}) /
\sqrt{2}$}. We write $\ket{x}_\theta = \ket{x_1}_{\theta_1} \otimes
\cdots \otimes \ket{x_n}_{\theta_n}$ for the $n$-qubit state where
string $x = (x_1,\ldots,x_n) \in \{0,1\}^n$ is encoded in bases
$\theta = (\theta_1,\ldots,\theta_n) \in \{+,\times\}^n$. For $S
\subseteq \{ 1, \ldots, n \}$ of size $s$, we denote by $\bar{S}
\assign \set{1,\ldots,n} \backslash S$ the complement of $S$ and define $x|_S \in \{ 0,1
\}^s$ and $\theta|_S \in \{ +,\times \}^s$ to be the restrictions
$(x_i)_{i \in S}$ and $(\theta_i)_{i \in S}$, respectively. For two
strings $x,y \in \set{0,1}^n$, we define the \emph{Hamming distance}
between $x$ and $y$ as $d_H(x,y) \assign
\left|\Set{i}{x_i \neq y_i}\right|$.

We use upper case letters for the random variables in the
proofs that describe the respective values in the protocol.
Given a bipartite quantum state $\rho_{XE}$, we say that $X$ is {\em
  classical} if $\rho_{XE}$ is of the form $\rho_{XE} = \sum_{x \in
  \X} P_X(x) \proj{x} \otimes \rho_E^x$ for a probability distribution
$P_X$ over a finite set $\X$, i.e.~the state of the quantum register
$E$ depends on the classical random variable $X$ in the sense that
$E$ is in state $\rho_E^x$ exactly if $X = x$. This naturally extends to states with two or more classical registers. 

For a state $\rho_{XE}$ as above, $X$ is {\em independent} of register $E$ if $\rho_{XE} = \rho_X \otimes \rho_E$, where $\rho_X = \sum_x P_X(x) \proj{x}$ and $\rho_E = \sum_x P_X(x) \rho_E^x$. 
We also need to express that a random variable $X$ is independent of a
quantum state $E$ {\em when given a random variable $Y$}. Independence
means that when given $Y$, the state $E$ gives no additional
information on $X$. Formally, adopting the notion introduced
in~\cite{DFSS07}, we require that $\rho_{X Y E}$ equals
$\rho_{X\leftrightarrow Y \leftrightarrow E}$, where the latter is
defined as
$$
\rho_{X\leftrightarrow Y \leftrightarrow E} \assign \sum_{x,y}P_{X
  Y}(x,y)\proj{x} \otimes \proj{y} \otimes \rho_{E}^y \, ,
$$ 
where $\rho_{E}^y \assign \sum_x P_{X|Y}(x|y) \rho_E^{x,y}$.
In other words, $\rho_{X Y E} = \rho_{X\leftrightarrow Y
  \leftrightarrow E}$ precisely if $\rho_E^{x,y} = \rho_E^{y}$ for all
$x$ and $y$.

Full (conditional) independence is often too strong a requirement, and
it usually suffices to be ``close'' to such a situation. Closeness of
two states $\rho$ and $\sigma$ is measured in terms of their trace
distance $\delta(\rho,\sigma) = \frac{1}{2} \tr(|\rho-\sigma|)$, where
for any operator $A$, $|A|$ is defined as \smash{$|A| \assign \sqrt{A A^\dag}$}.

A quantum algorithm consists of a family $\{ C_n\}_{n \in \naturals}$
of quantum circuits and is said to run in \emph{polynomial time}, if
the number of gates of $C_n$ is polynomial in $n$. Two families of
quantum states $\set{\rho_n}_{n \in \naturals}$ and $\set{\sigma_n}_{n
  \in \naturals}$ are called \emph{quantum-computationally
  indistinguishable}, denoted $\rho \approxq \sigma$, if any
polynomial-time quantum algorithm has negligible advantage (in $n$) of
distinguishing $\rho_n$ from $\sigma_n$. Analogously, we call them
\emph{statistically indistinguishable}, $\rho \approxs \sigma$, if
their trace distance $\delta(\rho_n,\sigma_n)$ is negligible in~$n$.

\begin{definition}[Min-Entropy]
The {\em min-entropy} of a random variable $X$ with probability 
distribution~$P_X$ is defined as $H_{\infty}(X) \assign -\log\bigl( \max_x P_X(x) \bigr)$.
\end{definition}

\begin{definition}[Max-Entropy]
The {\em max-entropy} of a density matrix $\rho$ is defined as $H_0(\rho) \assign \log\bigl( \rank{\rho}\bigr)$.
\end{definition}
We will make use of the following properties of a pure state that can
be written as a ``small superposition'' of basis vectors.

\begin{lemma}\label{lemma:SmallSuperpos}
Let $\ket{\varphi_{AE}} \in {\cal H}_A \otimes {\cal H}_E$ be of the form 
$\ket{\varphi_{AE}} = \sum_{i \in J} \alpha_i \ket{i}\ket{\varphi_E^i}$, where 
$\set{\ket{i}}_{i \in I}$ is a basis of ${\cal H}_A$ and $J \subseteq I$. Then, 
the following holds.
\begin{enumerate}
\item\label{it:boundHoo} Let $\tilde{\rho}_{AE} = \sum_{i \in J} |\alpha_i|^2 \proj{i}\otimes\proj{\varphi_E^i}$, and let $W$ and $\tilde{W}$ be the outcome of
 measuring $A$ of $\ket{\varphi_{AE}}$ respectively of $\tilde{\rho}_{AE}$ 
in some basis $\set{\ket{w}}_{w \in \cal W}$. Then,%
\footnote{Using Renner's definition for conditional min-entropy~\cite{Renner05}, one can actually show that $H_{\infty}(W|E) \geq H_{\infty}(\tilde{W}|E) - \log|J|$. }
$$
H_{\infty}(W) \geq H_{\infty}(\tilde{W}) - \log|J| \, .
$$
\item\label{it:boundHo} The reduced density matrix $\rho_E = \tr_A(\proj{\varphi_{A E}})$ has max-entropy
$$
H_0(\rho_E) \leq \log |J| \, .
$$
\end{enumerate}
\end{lemma}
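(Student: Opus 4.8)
The plan is to treat the two items separately; item~\ref{it:boundHo} is immediate, and item~\ref{it:boundHoo} needs exactly one application of Cauchy--Schwarz. Throughout I treat $\set{\ket{i}}_{i\in I}$ as an orthonormal basis (as it is in all our applications, and as is needed for $\tilde\rho_{AE}$ to be a density operator), and I take $\ket{\varphi_{AE}}$ to be normalized, so $\sum_{i\in J}|\alpha_i|^2 = 1$; write $d\assign|J|$.

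For item~\ref{it:boundHo}: the vector $\ket{\varphi_{AE}}$ lies in $\mathcal{H}_A'\otimes\mathcal{H}_E$, where $\mathcal{H}_A'\assign\mathrm{span}\set{\ket{i}:i\in J}$ has dimension $d$. Hence the Schmidt rank of $\ket{\varphi_{AE}}$ across the bipartition between $A$ and $E$ is at most $d$, and since $\rank{\rho_E}$ equals this Schmidt rank we get $\rank{\rho_E}\le d$, i.e.\ $H_0(\rho_E)\le\log d$. Equivalently, orthonormality of $\set{\ket{i}}$ yields $\rho_E=\sum_{i\in J}|\alpha_i|^2\proj{\varphi_E^i}$ directly, a sum of $d$ operators each of rank at most one.

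For item~\ref{it:boundHoo}: First I would write down the two output distributions explicitly. Projecting register $A$ of $\ket{\varphi_{AE}}$ onto $\ket{w}$ leaves the sub-normalized vector $\ket{w}\otimes\sum_{i\in J}\alpha_i\langle w|i\rangle\ket{\varphi_E^i}$, so $P_W(w)=\bigl\|\sum_{i\in J}\alpha_i\langle w|i\rangle\ket{\varphi_E^i}\bigr\|^2$; the analogous computation for the already-decohered state gives $P_{\tilde W}(w)=\sum_{i\in J}|\alpha_i|^2\,|\langle w|i\rangle|^2$. Now apply the elementary bound $\bigl\|\sum_{k=1}^{d}v_k\bigr\|^2\le d\sum_{k=1}^{d}\|v_k\|^2$ (triangle inequality followed by Cauchy--Schwarz) with $v_i\assign\alpha_i\langle w|i\rangle\ket{\varphi_E^i}$ to obtain $P_W(w)\le d\cdot P_{\tilde W}(w)$ for every $w\in\mathcal{W}$. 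Taking the maximum over $w$ and then $-\log$ gives $H_\infty(W)\ge H_\infty(\tilde W)-\log d$.

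The only point that needs care is that the states $\ket{\varphi_E^i}$ are in general not mutually orthogonal, so the cross terms in the expression for $P_W(w)$ do not vanish and one cannot compare $P_W$ and $P_{\tilde W}$ term by term; the factor $d=|J|$ is precisely the loss incurred by the Cauchy--Schwarz step, and it is essentially tight. For the strengthening stated in the footnote one would instead note that $\tilde\rho_{AE}$ arises from $\ket{\varphi_{AE}}$ by a measurement whose outcome fits into a register of $\log|J|$ qubits and invoke the chain rule for Renner's conditional min-entropy; but the inequality as stated follows from the direct computation above.
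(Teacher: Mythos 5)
Your proof is correct and follows essentially the same route as the paper: item~\ref{it:boundHoo} is one application of Cauchy--Schwarz to get $P_W(w)\le |J|\,P_{\tilde W}(w)$ pointwise, and item~\ref{it:boundHo} is rank sub-additivity (your Schmidt-rank phrasing is an equivalent way to say the same thing). The one small thing worth noting is that you are more careful than the paper's write-up in keeping the $\ket{\varphi_E^i}$ factors visible when computing $P_W(w)=\bigl\|\sum_{i\in J}\alpha_i\langle w|i\rangle\ket{\varphi_E^i}\bigr\|^2$; the paper writes $P_W(w)=\bigl|\bra{w}\sum_i\alpha_i\ket{i}\bigr|^2$, which is literally correct only if $\ket{i}$ is read as shorthand for $\ket{i}\ket{\varphi_E^i}$ and $|\cdot|$ as the norm on ${\cal H}_E$ --- your explicit treatment (triangle inequality then Cauchy--Schwarz) is the cleaner way to state the same bound and, as you observe, the cross-terms do not vanish since the $\ket{\varphi_E^i}$ need not be orthogonal.
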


 \begin{proof}
 For \ref{it:boundHoo}., we may understand $\tilde{\rho}_{AE}$ as being in state $\ket{i}\ket{\varphi_E^i}$ with probability $|\alpha_i|^2$, so that we easily see that
 \begin{align*}
 P_{\tilde{W}}(w) &= \sum_{i \in J} |\alpha_i|^2 |\braket{w}{i}|^2 
 = \sum_{i \in J} |\alpha_i|^2 |\braket{w}{i}|^2 \cdot \sum_{i \in J} 1^2 \cdot \frac{1}{|J|} \\
 &\geq \bigg|\sum_{i \in J} \alpha_i \braket{w}{i}\bigg|^2 \cdot \frac{1}{|J|} 
 = \bigg|\bra{w}\sum_{i \in J} \alpha_i \ket{i}\bigg|^2 \cdot \frac{1}{|J|} 
 = P_{W}(w) \cdot \frac{1}{|J|} \, ,
 \end{align*}
 where the inequality is Cauchy-Schwartz. This proves \ref{it:boundHoo}.
 
 For \ref{it:boundHo}., note that $\rho_E = \tr_A( \proj{\varphi_{A E}} ) = \sum_{i
   \in J} |\alpha_i|^2 \proj{\varphi_E^i}$. 
   The claim follows immediately from the sub-additivity of the rank: 
 $$
 \rank{\rho_E} \leq \sum_{i \in J} \rank{|\alpha_i|^2 \proj{\varphi_E^i}} \leq  \sum_{i \in J} 1 = |J| \, ,
 $$
 where we use that the $\proj{\varphi_E^i}$'s have rank at most 1.
 \qed
 \end{proof}

\section{Definition of Security} \label{sec:security}

In order to define security of our two-party protocols, we follow the
framework put forward by Fehr and Schaffner in~\cite{FS09}. We are
interested in quantum protocols that implement \emph{classical
  functionalities} such as oblivious transfer. Such primitives are
often used as building blocks in more complicated classical
(multi-party) protocols which implement advanced tasks. Therefore, it
is natural to restrict our focus on quantum protocols that run in a
classical environment and have classical in- and outputs.  A two-party
quantum protocol $\Pi = (\A_m,\B_m)$ consists of an infinite family of
interactive quantum circuits for players Alice and Bob indexed by the
security parameter $m$ (in our case, $m$ will also be the number of
qubits transmitted). To ease notation, we often leave the dependence
on $m$ implicit.  A classical non-reactive two-party {\em ideal 
functionality} $\F$ is given by a conditional probability
distribution $P_{\F(U,V)|UV}$, inducing a pair of random variables
$(X,Y) = \F(U,V)$ for every joint distribution of $U$ and $V$. The
definition of correctness of a protocol is straightforward.
\begin{definition}[Correctness] \label{def:correctness}
 A protocol $\Pi = (\A,\B)$ \emph{correctly implements} an ideal
 classical functionality $\F$, if for every distribution of the input
 values $U$ and $V$, the resulting common output
satisfies
\[ (U,V,(X,Y)) \approxs (U,V, \F(U,V)) \, .
\]
\end{definition}

Let us denote by $out_{\hA,\hB}^\F$ the joint output%
\footnote{We use a slightly different notation here than
  in~\cite{FS09}. Our notation $out_{\hA,\hB}^\F$ does not mention the
  name of the input registers and corresponds to $(\F_{\hA,\hB})
  \rho_{UV}$ in~\cite{FS09}.} of the ``ideal-life'' protocol, where
Alice and Bob forward their inputs to $\F$ and output whatever they
obtain from~$\F$.  And we write $out_{\hA,\dhB}^\F$ for the joint
output of the execution of this protocol with a dishonest Bob with
strategy $\dhB$ (and similarly for a dishonest Alice). Note that Bob's possibilities in the ideal world are
very limited: he can produce some classical input $V$ for $\F$ from
his input quantum state $V'$, and then he can prepare and output a
quantum state $Y'$ which might depend on $\F$'s classical reply $Y$.

\subsection{Information-Theoretic Security} \label{sec:infSecurity} 

We define information-theoretic security using the real/ideal-world
paradigm, which requires that by attacking a protocol in the
real world the dishonest party cannot achieve (significantly) more
than when attacking the corresponding functionality in the
ideal world.  To be consistent with the framework used in~\cite{FS09},
we restrict the joint input state, consisting of a classical input to
the honest party and a possibly quantum input to the dishonest party,
to a special form: in case of a dishonest Bob (and correspondingly for
a dishonest Alice), we require that Bob's input consists of a
classical part $Z$ and a quantum part $V'$, such that the joint state
$\rho_{UZV'}$ satisfies $\rho_{UZV'} = \rho_{\MC{U}{Z}{V'}}$, i.e.,
that $V'$ is correlated with Alice's input only via the classical~$Z$.
We call a joint input state of that form (respectively of the form
$\rho_{U'ZV} = \rho_{\MC{U'}{Z}{V}}$ in case of dishonest Alice) a
{\em legitimate} input state.  As shown in~\cite{FS09}, this
restriction on the input state leads to a meaningful security
definition with a composition theorem that guarantees sequential
composition within {\em classical} outer protocols.  Furthermore, the
results of Section~\ref{sec:improving} also hold when quantifying over all
(possibly non-legitimate) joint input states.

\begin{definition}[Unconditional security 
against dishonest Alice] \label{def:unboundedAliceNiceOrder} 
A protocol $\Pi =
  (\A,\B)$ implements an ideal classical functionality $\F$
  unconditionally securely against dishonest Alice, if for any
  real-world adversary $\dA$ there exists an ideal-world adversary
  $\dhA$ such that for any legitimate input state, it
  holds that the outputs in the real and ideal world are statistically
  indistinguishable, i.e.~
$$
out_{\dA,\B}^\Pi \approxs out_{\dhA,\hB}^\F \, .
$$
\end{definition}

\begin{definition}[Unconditional security 
against dishonest Bob] \label{def:qmemoryBob} 
A protocol $\Pi = (\A,\B)$ implements an
  ideal classical functionality $\F$ unconditionally securely against
  dishonest Bob, if for any real-world adversary $\dB$ there exists an
  ideal-world adversary $\dhB$ such that for any legitimate input state, it holds that the outputs in the real and ideal
  world are statistically indistinguishable, i.e.~
$$
out_{\A,\dB}^\Pi \approxs out_{\hA,\dhB}^\F \, .
$$
\end{definition}

It has been shown in Theorem~5.1 in~\cite{FS09} that protocols
fulfilling the above definitions compose sequentially as follows. For
a classical real-life protocol $\Sigma$ which makes at most $k$ oracle
calls to functionalities $\F_1,\ldots,\F_k$, it is guaranteed that
whatever output $\Sigma$ produces, the output produced when the oracle
calls are replaced by $\eps$-secure protocols is at distance at most
$O(k \eps)$.

Notice that in the definitions above, we do \emph{not} require the
running time of ideal-world adversaries to be polynomial whenever the
real-life adversaries run in polynomial time. This way of defining
unconditional security can lead to the (unwanted) effect that
unconditional security does not necessarily imply computational
security. However, for the security of the construction proposed in
this paper, efficient ideal-life adversaries can be guaranteed, as discussed in Section~\ref{sec:efficientsimulation}.

\subsection{Computational Security in the CRS Model} \label{sec:compSecurity}

One can define security against a computationally bounded dishonest
Bob analogously to information-theoretic security with the two
differences that the input given to the parties has to be sampled by
an efficient quantum algorithm and that the output states should be
computationally indistinguishable.

In the common-reference-string (CRS) model, all participants in the
real-life protocol $\Pi_{\A,\B}$ have access to a classical public
string $\crs$ which is chosen before any interaction starts according
to a distribution only depending on the security parameter. On the
other hand, the participants in the ``ideal-life'' protocol
$\F_{\hA,\hB}$ interacting only with the ideal functionality do not make
use of the string~$\crs$. Hence, an ideal-world adversary $\dhB$, that
operates by simulating the real world to the adversary $\dB$, is free to choose
$\crs$ in any way he wishes.

In order to define computational security against a dishonest Bob in
the CRS model, we consider a polynomial-size
quantum circuit, called \emph{input sampler}, which takes as input the
security parameter $m$ and the CRS $\crs$ (chosen according to its
distribution) and produces the input state $\rho_{U Z V'}$; $U$
is Alice's classical input to the protocol, and $Z$ and $V'$ denote
the respective classical and quantum information given to dishonest
Bob. We call the input sampler {\em legitimate} if $\rho_{U ZV'} =
\rho_{\MC{U}{Z}{V'}}$. 

In the following and throughout the article, 
we let $\dBobPoly$ be the family of all {\em polynomial-time} 
quantum strategies for dishonest Bob $\dB$. 

\begin{definition}[Computational security 
against dishonest Bob] \label{def:polyboundedBobCRS} 
A protocol $\Pi = (\A,\B)$
  implements an ideal classical functionality $\F$ computationally
  securely against dishonest Bob, if for any real-world adversary $\dB
  \in \dBobPoly$ who has access to the common reference string $\crs$,
  there exists an ideal-world adversary $\dhB \in \dBobPoly$ not using
  $\crs$ such that for any efficient legitimate input sampler, it holds
  that the outputs in the real and ideal world are
  q-indistinguishable, i.e.~
$$
out_{\A,\dB}^\Pi \approxq out_{\hA,\dhB}^\F \, .
$$
\end{definition}
In Appendix~\ref{app:composition}, we show that also the computational security definition, as given here, allows for (sequential) composition of quantum protocols into classical outer protocols. 

\section{Improving the Security via Commit-and-Open}
\label{sec:improving}
\subsection{Security against Benign Bob} 
\label{sec:defbenign}

In this paper, we consider quantum two-party
protocols that follow a particular but very typical construction design. These protocols consist of two phases, called {\em preparation}
and {\em post-processing} phase, and are as specified in Figure~\ref{fig:BB84-type}. We call a protocol that follows this construction design a {\em BB84-type} protocol. 

\begin{figure}
\begin{framed}
\noindent\hspace{-1.5ex} {\sc Protocol $\Pi$ } \\[-4ex]
\begin{description}\setlength{\parskip}{0.5ex}
\item[{\it Preparation:}] $\A$ chooses $x \in_R \set{0,1}^n$ and $\theta \in_R \set{+,\x}^n$ and sends $\ket{x}_{\theta}$ to~$\B$, and $\B$ chooses $\hat{\theta} \in_R \set{+,\times}^n$ and obtains $\hat{x} \in \set{0,1}^n$ by measuring $\ket{x}_{\theta}$ in basis $\hat{\theta}$. 
\item[{\it Post-processing:}] Arbitrary classical communication and 
local computations. 
\end{description}
\vspace{-1.5ex}
\end{framed}
\vspace{-1.5ex}
 \caption{Generic BB84-type quantum protocol $\Pi$. } 
 \label{fig:BB84-type} 
\end{figure}

The following definition captures information-theoretic
security against a somewhat mildly dishonest Bob who we call a {\em benign} (dishonest) Bob. Such a dishonest Bob is benign in that, in the preparation phase, he does not deviate too much from what he is supposed to do; in the post-processing phase though, he may be arbitrarily dishonest. 

To make this description formal, we fix an arbitrary choice of $\theta$ and an arbitrary value for the classical information, $z$, which $\dB$ may obtain as a result of the preparation phase (i.e.~$z = (\hat{\theta},\hat{x})$ in case $\dB$ is actually honest). 
Let $X$ denote the random variable describing the bit-string $x$, where we understand the distribution $P_X$ of $X$ to be conditioned on the fixed values of $\theta$ and~$z$. 
Furthermore, let $\rho_E$ be the state of $\dB$'s quantum register $E$ after the preparation phase. Note that, still with fixed $\theta$ and~$z$, $\rho_E$ is of the form $\rho_E = \sum_x P_X(x) \rho^x_E$, where $\rho^x_E$ 
is the state of $\dB$'s quantum register in case $X$ takes on the
value $x$.  
In general, the $\rho^x_E$'s may be mixed, but we can think of them as being reduced pure states: $\rho^x_E = \tr_R(\proj{\psi_{ER}^x})$ for a suitable register $R$ and pure states $\ket{\psi_{ER}^x}$; we then call the state $\rho_{ER} = \sum_x P_X(x) \proj{\psi_{ER}^x}$ a {\em pointwise purification} (with respect to $X$) of $\rho_E$. 

Obviously, in case $\dB$ is honest, $X_i$ is fully random
whenever $\theta_i \neq \hat{\theta}_i$, so that $\Hmin\bigl(X|_I
\,\big|\,X|_{\bar{I}} = x|_{\bar{I}}\bigr) =
d_H\bigl(\theta|_I,\hat{\theta}|_I\bigr)$ for every $I \subseteq
\set{1,\ldots,n}$ and every $x|_I$, and $\dB$ does not store any
non-trivial quantum state so that $R$ is ``empty'' and $\Hmax(\rho_{ER}) = \Hmax(\rho_E) = 0$.
A benign Bob $\dB$ is now specified to behave close-to-honestly in the
preparation phase: he produces an auxiliary output $\hat{\theta}$
after the preparation phase, and given this output, we are in a
certain sense close to the ideal situation where Bob really measured
in basis $\hat{\theta}$ as far as the values of $\Hmin\bigl(X|_I
\,\big|\,X|_{\bar{I}} = x|_{\bar{I}}\bigr)$ and
$\Hmax(\rho_{ER})$ are concerned.%
\footnote{The reason why we consider the {\em pointwise purification}
  of $\rho_E$ is to prevent Bob from artificially blowing up
  $\Hmax(\rho_{ER})$ by locally generating a large mixture or storing
  an unrelated mixed input state. }
We now make this precise:

\begin{definition}[Unconditional security 
against {\em benign} Bob] \label{def:BenignBob}
A BB84-type quantum protocol $\Pi$ securely implements $\cal F$ against a {\em $\beta$-benign} Bob for some parameter $\beta \geq 0$, if it securely implements $\cal F$ according to Definition~\ref{def:qmemoryBob}, with the following two modifications:
\begin{enumerate}
\item The quantification is over all $\dB$ with the following
  property: after the preparation phase $\dB$ either aborts, or else
  produces an auxiliary output $\hat{\theta} \in
  \set{+,\x}^n$. Moreover, the joint state of $\A, \dB$ (after
  $\hat{\theta}$ has been output) is statistically indistinguishable
  from a state for which it holds that for any fixed values of
  $\theta$, $\hat{\theta}$ and $z$, for any subset $I \subseteq
  \set{1,\ldots,n}$, and for any $x|_{\bar{I}}$
\begin{equation} \label{eq:staterequirements}
\Hmin\bigl(X|_I \,\big|\,X|_{\bar{I}} = x|_{\bar{I}}\bigr) \geq d_H\bigl(\theta|_I,\hat{\theta}|_I\bigr) - \beta n
\qquad\text{and}\qquad
\Hmax\bigl(\rho_{ER}\bigr) \leq \beta n 
\end{equation}
where $\rho_{ER}$ is the pointwise purification of $\rho_E$ with respect to $X$. 
\item $\dhB$'s running-time is polynomial in the running-time of $\dB$. 
\end{enumerate} 
\end{definition}

\subsection{From Benign to Computational Security}

We show a generic compiler which transforms any BB84-type protocol into a new quantum protocol for the same task. The compiler achieves that if the original protocol is unconditionally secure against dishonest Alice and unconditionally secure against {\em benign} Bob, then the compiled protocol is still unconditionally secure against dishonest Alice and it is {\em computationally secure} against {\em arbitrary} dishonest Bob. 

The idea behind the construction of the compiler is to incorporate a commitment scheme and force Bob to behave benignly by means of a commit-and-open procedure. 
Figure~\ref{fig:compiled} shows the compilation of an arbitrary BB84-type protocol~$\Pi$. The quantum communication is increased from $n$ to $m = n/(1-\alpha)$ qubits, where $0 < \alpha < 1$ is some additional parameter that can be arbitrarily chosen. The compiled protocol also requires 3 more rounds of interaction. 

\begin{figure}
\begin{framed}
\noindent\hspace{-1.5ex} {\sc Protocol $\compile(\Pi)$ } \\[-4ex]
\begin{description}\setlength{\parskip}{0.5ex}
\item[{\it Preparation:}] $\A$ chooses $x \in_R \set{0,1}^m$ and
  $\theta \in_R \set{+,\x}^m$ and sends $\ket{x}_{\theta}$
  to~$\B$. Then, $\B$ chooses $\hat{\theta} \in_R \set{+,\times}^m$ and obtains $\hat{x} \in \set{0,1}^m$ by measuring $\ket{x}_{\theta}$ in basis $\hat{\theta}$. 
\item[{\it Verification:}] 
\begin{enumerate}
\item 
$\B$ commits to $\hat{\theta}$ and $\hat{x}$ position-wise:   
$c_i  \assign \commit\bigl((\hat{\theta}_i,\hat{x}_i),r_i\bigr)$ with 
randomness $r_i$ for $i = 1,\ldots,m$. He sends the commitments to $\A$.
\item\label{step:check} $\A$ sends a random test subset $T
  \subset \{1,\ldots,m \}$ of size $\alpha m$. $\B$ opens
  $c_i$ for all $i \in T$, and $\A$ checks that the
  openings were correct and that $x_i = \hat{x}_i$ whenever $\theta_i
  = \hat{\theta}_i$. If all tests are passed, $\A$ accepts,
  otherwise, she rejects and aborts.
\item The tested positions are discarded by both parties: $\A$ and $\B$ restrict $x$ and $\theta$, respectively $\hat{\theta}$ and $\hat{x}$, to 
$i \in \bar{T}$.
\end{enumerate}
\item[{\it Post-processing:}] As in $\Pi$ (with $x, \theta,\hat{x}$ and $\hat{\theta}$ restricted to the positions $i \in \bar{T}$). 
\end{description}
\vspace{-1.5ex}
\end{framed}
\vspace{-1.5ex}
 \caption{Compiled protocol $\compile(\Pi)$.} 
 \label{fig:compiled} 
\end{figure}

We need to specify
what kind of commitment scheme to use. In order to preserve
unconditional security against dishonest Alice, the commitment scheme needs to be unconditionally hiding, and so can at best be computationally binding. 
However, for a plain computationally binding commitment scheme, we do not know how to reduce the computational security of $\compile(\Pi)$ against dishonest Bob to the computational binding property of the commitment scheme.%
\footnote{Classically, this would be done by a rewinding argument, but this fails to work for a quantum Bob. }
Therefore, we use a commitment scheme with additional properties: we require a {\em keyed} commitment scheme $\commit_\pk$, where the corresponding public key $\pk$ is generated by one of two possible key-generation algorithms: $\GH$ or $\GB$. For a key $\pkH$ generated by $\GH$, the commitment scheme $\commit_\pkH$ is unconditionally hiding, whereas the other generator, $\GB$, actually produces a key {\em pair} $(\pkB,\sk)$, so that the secret key $\sk$ allows to efficiently extract $m$ from $\commit_\pkB(m,r)$, and as such $\commit_\pkB$ is unconditionally binding. Furthermore, we require $\pkH$ and $\pkB$
to be computationally indistinguishable, even against quantum attacks. 
We call such a commitment scheme a {\em dual-mode} commitment
scheme.%
\footnote{The notions of dual-mode {\em cryptosystems} and of meaningful/meaningless encryptions, as introduced in~\cite{PVW08} and~\cite{KN08}, are similar in spirit but differ slightly technically. }
As a candidate for implementing such a system, we propose the public-key encryption scheme of Regev~\cite{Regev05}, which is
based on a worst-case lattice assumption and is not known to be breakable even by
(efficient) quantum algorithms. Regev does not explicitly state that the
scheme has the property we need, but this is implicit in his proof that the underlying
computational assumption implies semantic security.%
\footnote{The proof
   compares the case where the public key is generated normally to a
   case where it is chosen with no relation to any secret key. It is
   then argued that the assumption implies that the two cases are
   computationally  indistinguishable, and that in the second case, a
   ciphertext carries essentially no information about the
   message. This argument implies what we need.}

For simplicity and efficiency, we consider the common-reference-string model, and we assume the key $\pkB$ for the commitment scheme,
generated according to $\GB$, to be contained in the CRS. 
We sketch in Section~\ref{sec:CRS} 
how to avoid the CRS model, at the cost of a non constant-round
construction where the parties generate the CRS jointly by means of a
coin-tossing protocol (see~\cite{DL09arxiv} for details).

We sometimes write $\compile_\pkH(\Pi)$ for the compiled protocol $\compile(\Pi)$ to stress that a key $\pkH$ produced by $\GH$ is used for the dual-mode commitment scheme, and we write $\compile_\pkB(\Pi)$ when a key $\pkB$ produced by $\GB$ is used instead. 

\begin{theorem}\label{thm:Compiler}
Let $\Pi$ be a BB84-type protocol, unconditionally secure against dishonest Alice and against $\beta$-benign Bob for some constant $\beta > 0$. 
Consider the compiled protocol $\compile(\Pi)$ for an arbitrary $\alpha > 0$, where the commitment scheme is instantiated by a dual-mode commitment scheme as described above. 
Then, $\compile(\Pi)$ is unconditionally secure against dishonest Alice and computationally secure against dishonest Bob in the CRS model. 
\end{theorem}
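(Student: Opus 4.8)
The plan is to establish the two security properties separately. The real protocol uses the dual-mode commitment in its unconditionally hiding mode, i.e.\ with a key $\pkH$ from $\GH$ in the CRS, which is what preserves unconditional security against Alice; only in the reduction against a computational Bob do we pass to the binding mode. For dishonest Alice: since $\commit_\pkH$ is unconditionally hiding, Bob's commitments reveal to $\dA$ nothing beyond the openings on the test set $T$, and honest Bob's $\bar T$-restricted data $(\hat\theta|_{\bar T},\hat x|_{\bar T})$ is exactly the state he holds after the preparation phase of $\Pi$. I would therefore build $\dhA$ so that it plays honest Bob's role through the preparation and verification phases -- choosing $\hat\theta$ at random, measuring whatever $\dA$ sends, committing, opening on $T$ -- and, if $\dA$'s checks pass, hands the $\bar T$-restricted state to the ideal-world adversary $\dhA_\Pi$ guaranteed by unconditional security of $\Pi$ against Alice, runs it on the post-processing phase, forwards its extracted input to $\F$, and outputs its output (aborts of $\dA$ being mapped to the corresponding ideal abort). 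Perfect hiding, the assumed security of $\Pi$, and the composition bound of~\cite{FS09} then give $out_{\dA,\B}^{\compile_\pkH(\Pi)} \approxs out_{\dhA,\hB}^\F$.

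For dishonest Bob I would first replace the CRS key $\pkH$ by a key $\pkB$ from $\GB$: since $\pkH\approxq\pkB$, the input sampler is efficient, and $\dB$ runs in polynomial time, the change is q-indistinguishable, so $out_{\A,\dB}^{\compile_\pkH(\Pi)}\approxq out_{\A,\dB}^{\compile_\pkB(\Pi)}$. In $\compile_\pkB(\Pi)$ the commitments are unconditionally binding and, given $\sk$, efficiently extractable, so I let $\dhB$ generate its own pair $(\pkB,\sk)$ via $\GB$, plant $\pkB$ as the CRS, run $\dB$ while playing honest Alice through the preparation and verification phases, and -- immediately after $\dB$ outputs the commitments $c_1,\dots,c_m$ -- extract the committed values $(\hat\theta_i,\hat x_i)$ using $\sk$. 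The crux is the claim that, conditioned on $\dB$ passing the test in Step~\ref{step:check}, the joint state of $\A$ and $\dB$ together with $\hat\theta|_{\bar T}$ (taken as $\dB$'s auxiliary output) is statistically close to a state meeting the two requirements of Definition~\ref{def:BenignBob} with the constant $\beta$ inherited from $\Pi$. Granting this, composing ``$\dhB$'s verification stage'' with the ideal-world adversary $\dhB_\Pi$ guaranteed by unconditional security of $\Pi$ against $\beta$-benign Bob yields $out_{\A,\dB}^{\compile_\pkB(\Pi)}\approxs out_{\hA,\dhB}^\F$; chaining with the CRS switch gives $out_{\A,\dB}^{\compile_\pkH(\Pi)}\approxq out_{\hA,\dhB}^\F$, and $\dhB$ runs in polynomial time because extraction is efficient and $\dhB_\Pi$ is polynomial by item~2 of Definition~\ref{def:BenignBob}.

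To prove the benign-state claim I would purify $\dB$'s preparation and commit round so that, by the binding property, the committed strings $\hat\theta,\hat x$ are fixed classical values before the random test set $T$ of size $\alpha m$ is chosen, and keep Alice's halves of the BB84 encoding in EPR form, so that ``passing the test'' becomes a projection. A quantum sampling argument then shows that the post-test state on the $\bar T$ register is statistically close to one in which, on all but at most $\beta n$ exceptional positions, Alice's qubit in Bob's committed basis $\hat\theta_i$ agrees with $\hat x_i$ and carries no residual entanglement with $\dB$ -- i.e.\ $\dB$ ``essentially measured'' those positions in basis $\hat\theta$. Writing this state as a superposition over the $\le 2^{\beta n}$ strings supported on the exceptional positions and invoking Lemma~\ref{lemma:SmallSuperpos}, part~\ref{it:boundHo} gives $\Hmax(\rho_{ER})\le\beta n$, and part~\ref{it:boundHoo} -- applied once Alice measures her registers in bases $\theta$ -- gives $\Hmin\bigl(X|_I \,\big|\, X|_{\bar I}=x|_{\bar I}\bigr)\ge d_H(\theta|_I,\hat\theta|_I)-\beta n$ for all $I$ and all $x|_{\bar I}$, since each non-exceptional position with $\theta_i\neq\hat\theta_i$ contributes a uniformly random bit independent of $\dB$. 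I expect this sampling-in-superposition step to be the main difficulty: one cannot measure $\dB$'s state to ``see'' what it did without disturbing it, so the bound on the exceptional set must come purely from the structure of the projection onto ``test passed'', with a clean dependence of $\beta$ on $\alpha$ and the statistical error; handling aborts, the CRS-dependence of the input sampler, and the constant-factor round accounting are routine.
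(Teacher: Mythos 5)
Your argument for dishonest Bob follows the paper's route closely: switch the CRS key from $\pkH$ to $\pkB$, use $\sk$ to extract $\hat\theta$ so that the compound adversary (which internally runs $\dB$ and plays the verification stage of honest Alice) becomes a benign Bob against $\Pi$, and then establish benignity by an EPR/quantum-sampling argument followed by Lemma~\ref{lemma:SmallSuperpos}. Two small imprecisions there: the sampling lemma (the paper's Lemma~\ref{lemma:idealstate}) produces a superposition over a Hamming \emph{ball} around $\hat x|_{\bar T}$ (about $2^{h(\eps)n}$ strings, not a subcube on a fixed ``exceptional'' set of $\beta n$ positions), so the phrasing ``on all but $\beta n$ exceptional positions \dots no residual entanglement'' overstates what the bound gives -- but since only the support \emph{size} enters Lemma~\ref{lemma:SmallSuperpos}, the conclusion survives with $\beta=h(\err+\eps)$.

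The genuine gap is on the Alice side. You let $\dhA$ play honest Bob literally -- \emph{measure} all $m$ qubits, commit, open on $T$ -- and then ``hand the $\bar T$-restricted state to $\dhA_\Pi$ to run on the post-processing phase.'' This does not type-check: $\dhA_\Pi$ is a simulator for the \emph{entire} protocol $\Pi$ starting from the preparation phase; it expects to receive the $n$ incoming qubits and decide for itself how to act on them, and there is no way to ``start it mid-protocol'' from classical measurement outcomes. The repair you would want -- re-preparing $\ket{\hat x|_{\bar T}}_{\hat\theta|_{\bar T}}$ and feeding these to $\dhA_\Pi$, or composing $\dhA_\Pi$ with $\B_\circ$ -- does not give the right distribution either, because it destroys the entanglement between $\dA$'s residual system and the transmitted qubits: a second measurement in an independent random basis is not distributionally equivalent to the single measurement honest $\B$ performs in $\compile(\Pi)$. (Consider $\dA$ sending $\ket 0$: measure once in a random basis versus measure, collapse, then measure again in a fresh random basis -- the joint law of basis and outcome differs.) The correct reduction, which the paper gestures at by remarking that the \emph{unbounded} $\dhA$ may break the binding property, is to build $\dA_\circ = (\dA,\tilde{\B})$ attacking $\Pi$ where $\tilde{\B}$ commits to \emph{dummy} values, upon learning $T$ measures only the $T$-positions and breaks binding to open the $T$-commitments consistently, and crucially forwards the $\bar T$ qubits to $\B_\circ$ (or to $\dhA_\Pi$) \emph{unmeasured}. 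Perfect hiding makes the dummy commitments indistinguishable, so $out_{\dA,\B}^{\compile_\pkH(\Pi)} = out_{\dA_\circ,\B_\circ}^\Pi$, and $\Pi$'s unconditional Alice-security is then invoked on the whole preparation-plus-post-processing of $\Pi$, not on the post-processing alone. Without the binding-break there is no way to avoid committing to the $\bar T$-outcomes in advance, and hence no way to pass live quantum states into the $\Pi$-simulator.
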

We now prove this theorem, which assumes noise-free quantum
communication; we explain in Section~\ref{sec:Noise} how to generalize
it for a noisy quantum channel. Correctness is obvious. In order to
show unconditional security against dishonest Alice, we notice that
the unconditional hiding property of the commitment scheme ensures
that dishonest Alice does not learn any additional
information. Furthermore, as the ideal-life adversary $\dhA$ is not
required to be time-bounded by
Definition~\ref{def:unboundedAliceNiceOrder}, she can break the
binding-property of the commitment scheme and thereby perfectly
simulate the behavior of an honest Bob towards $\hA$ attacking
$\compile(\Pi)$. The issue of efficiency of the ideal-life adversaries
is addressed in Section~\ref{sec:efficientsimulation}.

  As for computational security
against dishonest Bob, according to
Definition~\ref{def:polyboundedBobCRS}, we need to prove that for
every real-world adversary $\dB \in \dBobPoly$ attacking
$\compile(\Pi)$, there exists a suitable ideal-world adversary $\dhB
\in \dBobPoly$ attacking $\F$ such that
$$
out_{\A,\dB}^{\compile(\Pi)} \stackrel{q}{\approx} out_{\hA,\dhB}^\F \, .
$$
First, note that by the computational indistinguishability of $\pkH$ and  $\pkB$,
\begin{equation}\label{eq:KeySwitch}
out_{\A,\dB}^{\compile(\Pi)} = out_{\A,\dB}^{\compile_\pkH(\Pi)} \stackrel{q}{\approx} out_{\A,\dB}^{\compile_{\pkB}(\Pi)} \, .
\end{equation} 
Then, we construct an adversary $\dB_\circ \in \dBobPoly$ who attacks the unconditional security against benign Bob of protocol $\Pi$, and which satisfies 
\begin{equation}\label{eq:BenignBob}
out_{\A,\dB}^{\compile_\pkB(\Pi)} = out_{\A_\circ,\dB_\circ}^{\Pi} \, ,
\end{equation} 
where $\A_\circ$ honestly executes $\Pi$. 
We define $\dB_\circ$ in the following way. Consider the execution of $\compile(\Pi)$ between $\A$ and $\dB$. 
We split $\A$ into two players $\A_\circ$ and $\tilde{\A}$, where we
think of $\tilde{\A}$ as being placed in between $\A_\circ$ and $\dB$,
see Figure~\ref{fig:Players}. $\A_\circ$ plays honest Alice's part of $\Pi$ while $\tilde{\A}$ acts as follows: It receives $n$ qubits from $\A_\circ$, produces $\alpha n/(1-\alpha)$ random BB84 qubits of its own and interleaves them randomly with those received and sends the resulting $m= n/(1-\alpha)$ qubits to $\dB$. It then does the verification step of $\compile(\Pi)$ with $\dB$, asking to have commitments corresponding to its own qubits opened. If this results in accept, it lets $\A_\circ$ finish the protocol with $\dB$. 
Note that the pair $(\A_\circ,\tilde{\A})$ does exactly the same as $\A$; however, we can also move the actions of $\tilde{\A}$ to Bob's side, and define $\dB_\circ$ as follows. 
$\dB_\circ$ samples $(\pkB,\sk)$ according to $\GB$ and executes $\Pi$ with $\A$ by locally running $\tilde{\A}$ and $\dB$, using $\pkB$ as CRS. If $\tilde{\A}$ accepts the verification then $\dB_\circ$ outputs $\hat{\theta} \in \set{0,1}^n$ (as required from a {\em benign} Bob), obtained by decrypting the unopened commitments  with the help of $\sk$; else, $\dB_\circ$ aborts at this point. It is now clear that 
Equation~\eqref{eq:BenignBob} holds: exactly the same computation takes place in both ``experiments'', the only difference being that they are executed partly by different entities. 
The last step is to show that
\begin{equation}\label{eq:ByAssumption}
out_{\A_\circ,\dB_\circ}^{\Pi} \stackrel{s}{\approx} out_{\hA,\dhB}^\F  \, ,
\end{equation} 
for some $\dhB$.
It is clear that the theorem follows from~\eqref{eq:KeySwitch} - \eqref{eq:ByAssumption} together.
\begin{figure}
\begin{center}
\begin{tikzpicture}
   \tikzstyle{every node}=[minimum size=6mm]
   \draw (0,0) node[draw] (zeroA) {$\A_\circ$};
   \draw (5,0) node[draw] (tildeA) {$\tilde{\A} $};
   \draw (10,0) node[draw] (dB) {$\dB$};
   \draw[rounded corners] (zeroA.east) -- node[below] {$\Pi$} (tildeA);
   \draw[rounded corners] (tildeA.east) -- node[above] {$\compile(\Pi)$} (dB);
   \draw[rounded corners] (zeroA.north) ++(-4mm,1mm) --  +(0mm, 2mm) -| node[pos=0.25,above] {$\A$} ([shift={(4mm,1mm)}] tildeA.north);
   \draw[rounded corners] (tildeA.south) ++(-4mm,-1mm) -- +(0mm, -2mm) -| node[pos=0.25,below] {$\dB_\circ$} ([shift={(4mm,-1mm)}] dB.south);
\end{tikzpicture}
\end{center}
\vspace{-3.5ex}
\small
 \caption{Constructing an attacker $\dB_\circ$ against $\Pi$ from an attacker $\dB$ against $\compile(\Pi)$.}\label{fig:Players} 
\end{figure}

Now~\eqref{eq:ByAssumption} actually claims that $\hA,\dhB$ successfully simulate $\A_\circ$ and $\dB_\circ$ executing 
$\Pi$, and this claim follows  by assumption of benign security 
of $\Pi$ if we show that  $\dB_\circ$ is $\beta$-benign according to 
Definition~\ref{def:BenignBob} for any  $\beta > 0$. 
We show this in the following subsection, i.e.,
 the joint state of $\A_\circ, \dB_\circ$ after the preparation phase
 is statistically indistinguishable from a state  $\rho_{Ideal}$ which satisfies 
the bounds \eqref{eq:staterequirements} from Definition~\ref{def:BenignBob}. 

\subsection{Completing the Proof: Bounding Entropy and Memory Size}\label{sec:Bounds}
First  recall that $\A_\circ$
executing $\Pi$ with $\dB_\circ$ can equivalently be thought of as
$\A$ executing $\compile_\pkB(\Pi)$ with $\dB$. Furthermore, a joint
state of $\A,\dB$ is clearly also a joint state of $\A_\circ, \dB_\circ$. 

To show the existence of $\rho_{Ideal}$ as promised above, it therefore suffices to show such a
state for  $\A,\dB$. In other words,
we need to show that  the execution of $\compile_\pkB(\Pi)$ with honest Alice $\A$ and arbitrarily dishonest Bob $\dB$ will, after verification, be close to a state where \eqref{eq:staterequirements} holds. 
To show this closeness, we consider an equivalent EPR-pair version, where Alice creates $m$ EPR pairs $(\ket{00}+\ket{11})/\sqrt{2}$, sends one qubit in 
each pair to Bob and keeps the others in register $A$. Alice measures her qubits only when needed: she measures the qubits within $T$ in Step~\ref{step:check} of the verification phase, and the remaining qubits at the end of the verification phase. 
With respect to the information Alice and Bob obtain, this EPR version
is {\em identical} to the original protocol $\compile_\pkB(\Pi)$: the
only difference is the point in time when Alice obtains certain
information. Furthermore, we can also do the following modification
without affecting~\eqref{eq:staterequirements}. Instead of measuring her qubits
in $T$ in {\em her} basis $\theta|_T$, she measures them in {\em
  Bob's} basis $\hat{\theta}|_T$; however, she still verifies only whether $x_i = \hat{x}_i$ for those $i \in T$ with $\theta_i = \hat{\theta}_i$. 
Because the positions $i \in T$ with $\theta_i \neq \hat{\theta}_i$
are not used in the protocol at all, this change has no effect. As the
commitment scheme is unconditionally binding if key $\pkB$ is used, Bob's basis $\hat{\theta}$ is well defined by his commitments (although hard to compute), even if Bob is dishonest. The resulting scheme is given in Figure~\ref{fig:EPR-Version}. 

\begin{figure}
\begin{framed}
\noindent\hspace{-1.5ex} {\sc Protocol EPR-$\compile_\pkB(\Pi)$ } \\[-4ex]
\begin{description}\setlength{\parskip}{0.5ex}
\item[{\it Preparation:}] $\A$ prepares $m$ EPR pairs and sends the second qubit in each pair to Bob while keeping the others in register $A = A_1\cdots A_m$. 
$\B$ chooses $\hat{\theta} \in_R \set{+,\times}^m$ and obtains $\hat{x} \in \set{0,1}^m$ by measuring the received qubits in basis $\hat{\theta}$. 
\item[{\it Verification:}] 
\begin{enumerate}
\item 
$\B$ commits to $\hat{\theta}$ and $\hat{x}$ position-wise:   
$c_i  \assign \commit\bigl((\hat{\theta}_i,\hat{x}_i),r_i\bigr)$ with 
randomness $r_i$ for $i = 1,\ldots,m$. He sends the commitments to $\A$.
\item\label{step:EPRcheck} $\A$ sends a random test subset $T
  \subset \{1,\ldots,m \}$ of size $\alpha m$. $\B$ opens
  $c_i$ for all $i \in T$.
  $\A$ chooses $\theta \in_R \set{+,\x}^m$, measures registers $A_i$ with $i \in T$ in basis $\hat{\theta}_i$ to obtain $x_i$, and she checks that the
  openings were correct and that $x_i = \hat{x}_i$ whenever $\theta_i
  = \hat{\theta}_i$ for $i \in T$. If all tests are passed, $\A$ accepts,
  otherwise, she rejects and aborts the protocol.
\item $\A$ measures the remaining registers in basis $\theta|_{\bar{T}}$ to obtain $x|_{\bar{T}}$. The tested positions are discarded by both parties: $\A$ and $\B$ restrict $x$ and $\theta$, respectively $\hat{\theta}$ and $\hat{x}$, to the positions $i \in \bar{T}$.
\end{enumerate}
\item[{\it Post-processing:}] As in $\Pi$ (with $x, \theta,\hat{x}$ and $\hat{\theta}$ restricted to the positions $i \in \bar{T}$). 
\end{description}
\vspace{-1.5ex}
\end{framed}
\vspace{-1.5ex}
 \caption{EPR version of $\compile_\pkB(\Pi)$.} 
 \label{fig:EPR-Version} 
\end{figure}

We consider an execution of the scheme from
Figure~\ref{fig:EPR-Version} with an honest Alice $\A$ and a dishonest
Bob $\dB$, and we fix $\hat{\theta}$ and $\hat{x}$, determined by
Bob's commitments.  Let $\ket{\varphi_{AE}} \in {\cal H}_A\otimes{\cal
  H}_E$ be the state of the joint system right before
Step~\ref{step:check} of the verification phase.  Since in the end, we
are anyway interested in the pointwise purification of Bob's state, we
may indeed assume this state to be pure; if it is not, then we purify
it and carry the purifying register $R$ along with $E$.
Clearly, if $\dB$ had honestly done his measurements then $\ket{\varphi_{AE}} = \ket{\hat{x}}_{\hat{\theta}}\otimes \ket{\varphi_E}$ for some $\ket{\varphi_E}\in {\cal H}_E$.
In this case, the quantum memory $E$ would be empty: $H_0(\proj{\varphi_E})=0$. 
Moreover, $X$, obtained by measuring $A|_{\bar{T}}$ in basis $\theta|_{\bar{T}}$, would contain $d_H(\theta|_{\bar{T}},\hat{\theta}|_{\bar{T}})$ random bits. 
We show that the verification phase enforces these properties, at least approximately in the sense of~\eqref{eq:staterequirements}, for an arbitrary dishonest Bob $\dB$. 

In the following, $r_H(\cdot,\cdot)$ denotes the relative Hamming distance between two strings, i.e., the Hamming distance divided by their length. 
Recall that $T \subset \set{1,\ldots,m}$ is random subject to $|T| = \alpha m$. Furthermore, for a fixed $\hat{\theta}$ but a randomly chosen $\theta$, the subset $T' = \Set{i \in T}{ \theta_i = \hat{\theta}_i}$ is a random subset (of arbitrary size) of $T$. 
Let the random variable $\Test$ describe the choice of $\test = (T,T')$ as specified above, and consider the state
$$
\rho_{\Test AE} = \rho_{\Test} \otimes \proj{\varphi_{AE}} = \sum_{\test} P_{\Test}(\test) \proj{\test} \otimes \proj{\varphi_{AE}}
$$
consisting of the classical $\Test$ and the quantum state $\ket{\varphi_{AE}}$. 

\begin{lemma}\label{lemma:idealstate}
For any $\eps > 0$, $\hat{x}\in\{0,1\}^m$ and 
$\hat{\theta}\in \{+,\times\}^m$, the state $\rho_{\Test AE}$ is 
negligibly close (in $m$) to a state 
$$
\tilde{\rho}_{\Test AE} = \sum_{\test} P_{\Test}(\test) \proj{\test} \otimes \proj{\tilde{\varphi}^{\test}_{AE}}
$$
where for any $\test = (T,T')$: 
$$
\ket{\tilde{\varphi}^{\test}_{AE}} = \sum_{x \in B_{\test}} 
\alpha^{\test}_x \ket{x}_{\hat{\theta}} \ket{\psi_E^x} 
$$
for $B_{\test} = \{x \in \{0,1\}^m \,|\, r_H(x|_{\bar{T}},\hat{x}|_{\bar{T}}) \leq
r_H(x|_{T'},\hat{x}|_{T'}) + \eps \}$ and arbitrary coefficients $\alpha^{\test}_x \in \C$. 
\end{lemma}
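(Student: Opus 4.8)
The plan is to peel off the quantum content at once and reduce everything to a classical tail bound for random sampling. Expand $\ket{\varphi_{AE}}$ in the basis of ${\cal H}_A$ singled out by Bob's committed bases $\hat\theta$, writing $\ket{\varphi_{AE}} = \sum_{x\in\{0,1\}^m}\alpha_x\ket{x}_{\hat\theta}\ket{\psi_E^x}$ with $\sum_x|\alpha_x|^2 = 1$ and normalized $\ket{\psi_E^x}$. For each value $\test=(T,T')$ let $P_\test$ be the projector onto $\mathrm{span}\{\ket{x}_{\hat\theta}: x\in B_\test\}\otimes{\cal H}_E$, and set $\ket{\tilde\varphi^\test_{AE}} \assign P_\test\ket{\varphi_{AE}}/\|P_\test\ket{\varphi_{AE}}\|$; this is exactly of the claimed form, with $\alpha^\test_x = \alpha_x/\|P_\test\ket{\varphi_{AE}}\|$ for $x\in B_\test$ (and $\hat x\in B_\test$ always, so the good subspace is never trivial). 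Both $\rho_{\Test AE}$ and $\tilde\rho_{\Test AE}$ are block-diagonal in the classical register $\Test$, so their trace distance is the $P_\Test$-average of the blockwise trace distances; and for a pure state and its renormalized projection, $\delta(\proj{\varphi_{AE}},\proj{\tilde\varphi^\test_{AE}}) = \sqrt{\,\sum_{x\notin B_\test}|\alpha_x|^2\,}$. Hence, by concavity of the square root (Jensen),
\begin{align*}
\delta(\rho_{\Test AE},\tilde\rho_{\Test AE})
&\le \sqrt{\;\sum_{\test}P_\Test(\test)\sum_{x\notin B_\test}|\alpha_x|^2\;}
= \sqrt{\;\sum_x|\alpha_x|^2\,\Pr\nolimits_{\Test}[x\notin B_\test]\;} \\
&\le \sqrt{\;\max_x \Pr\nolimits_{\Test}[x\notin B_\test]\;}\,.
\end{align*}
It therefore suffices to prove that $\Pr_\Test[x\notin B_\test]$ is negligible in $m$, uniformly over $x\in\{0,1\}^m$.

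To this end, fix $x$ and let $e\in\{0,1\}^m$ be its error pattern relative to $\hat x$ (so $e_i=1$ iff $x_i\neq\hat x_i$); then $r_H(x|_S,\hat x|_S)$ equals the relative Hamming weight $\mu_S$ of $e|_S$, and I write $\mu$ for the relative weight of $e$ on all of $\{1,\dots,m\}$, so that the event $x\notin B_\test$ reads $\mu_{\bar T}>\mu_{T'}+\eps$. Recall the law of $\Test$: $T$ is a uniformly random $\alpha m$-subset, and, given $T$, each index of $T$ joins $T'$ independently with probability $\tfrac12$ (since $T'=\{i\in T:\theta_i=\hat\theta_i\}$ for a uniform, independent $\theta$). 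The plan is to show that with overwhelming probability $\mu_{\bar T}\le\mu+\tfrac\eps2$ \emph{and} $\mu_{T'}\ge\mu-\tfrac\eps2$, which together give $\mu_{\bar T}\le\mu_{T'}+\eps$, i.e.\ $x\in B_\test$. The set $\bar T$ is a uniformly random $(1-\alpha)m$-subset, so $|\mu_{\bar T}-\mu|\le\tfrac\eps2$ fails only with probability $e^{-\Omega(m)}$ by Hoeffding's tail inequality for sampling without replacement (valid since $\alpha$ and $\eps$ are constants, so the sample size is $\Theta(m)$). For $T'$, its size $|T'|$ is $\mathrm{Bin}(\alpha m,\tfrac12)$, hence lies in $[\tfrac{\alpha m}{4},\alpha m]$ except with probability $e^{-\Omega(m)}$ by a Chernoff bound; and a short computation shows that, conditioned on $|T'|=k$, the set $T'$ is uniformly distributed over all $k$-subsets of $\{1,\dots,m\}$, so again by sampling without replacement $|\mu_{T'}-\mu|\le\tfrac\eps2$ fails only with probability $e^{-\Omega(m)}$ whenever $k\ge\tfrac{\alpha m}{4}$. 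A union bound over these $O(1)$ events yields $\Pr_\Test[x\notin B_\test]\le e^{-\Omega(m)}$ with constants depending only on $\alpha$ and $\eps$; together with the displayed estimate this proves the lemma.

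The genuinely delicate point is the estimate for $T'$: it is not a fixed-size uniform subset, so no sampling bound applies to it directly, and one has to condition on its size while controlling separately (via the Chernoff bound for the binomial) the event that $|T'|$ is of order $m$. Everything else is routine bookkeeping; in particular the reduction in the first paragraph shows that the overall approximation error is of the same exponentially small order as the classical sampling error, which for fixed $\eps$ is negligible in $m$, as claimed.
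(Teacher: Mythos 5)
Your proof is correct and follows essentially the same route as the paper's: the same decomposition of $\ket{\varphi_{AE}}$ in the $\hat\theta$-basis, the same renormalized projections onto the subspaces spanned by $B_{\test}$, the identity $\delta(\proj{\varphi_{AE}},\proj{\tilde\varphi^{\test}_{AE}})=|\eps_{\test}^\perp|$ for pure states, block-diagonality in $\Test$ plus Jensen to reduce the trace distance to the classical tail probability $\Pr_\Test[x\notin B_\test]$, and finally a Hoeffding-type sampling bound. Where the paper simply cites Hoeffding and leaves the classical estimate as a known fact, you correctly spell out the one genuine subtlety --- that $T'$ has random (binomial) size, so one must first control $|T'|$ via a Chernoff bound for $\mathrm{Bin}(\alpha m,\tfrac12)$ and then, using that $T'$ conditioned on its size is a uniform subset, apply the sampling-without-replacement inequality; this is a detail the paper glosses over.
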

In other words, we are close to a situation where for {\em any} choice
of $T$ and $T'$ and for {\em any} outcome $x|_T$ when measuring $A|_T$
in basis $\hat{\theta}|_T$, the relative error
$r_H(x|_{T'},\hat{x}|_{T'})$ gives an upper bound (which holds with
probability 1) on the relative error
$r_H(x|_{\bar{T}},\hat{x}|_{\bar{T}})$ one would obtain by measuring
the remaining subsystems $A_i$ with $i \in \bar{T}$ in basis
$\hat{\theta}_i$. 
\begin{proof}
For any $\test$ we let $\ket{\tilde{\varphi}^{\test}_{AE}}$ be the
  renormalized projection of $\ket{\varphi_{AE}}$ into the
  subspace $\mathrm{span}\{\ket{x}_{\hat{\theta}} \,|\, x \in
  B_{\test} \} \otimes {\cal H}_E$ and let
  $\ket{\tilde{\varphi}^{\test \perp}_{AE}}$ be the renormalized
  projection of $\ket{\varphi_{AE}}$ into the orthogonal
  complement, such that $\ket{\varphi_{AE}} = \eps_{\test}
  \ket{\tilde{\varphi}^{\test}_{AE}} + \eps_{\test}^\perp
  \ket{\tilde{\varphi}^{\test \perp}_{AE}}$ with $\eps_{\test} =
  \braket{\tilde{\varphi}^{\test}_{AE}}{\varphi_{AE}}$ and
  $\eps_{\test}^\perp = \braket{\tilde{\varphi}^{\test
      \perp}_{AE}}{\varphi_{AE}}$. By construction,
  $\ket{\tilde{\varphi}^{\test}_{AE}}$ is of the form
  required in the statement of the lemma. 
  A basic property of the trace norm of pure
  states gives
$$
\delta\bigl( \proj{\varphi_{AE}}, \proj{\tilde{\varphi}^{\test}_{AE}}
\bigr) = \sqrt{1 -
  |\braket{\tilde{\varphi}^{\test}_{AE}}{\varphi_{AE}}|^2} =
|\eps_{\test}^\perp| \, .
$$
This last term corresponds to the 
square root of the probability, when given $\test$, 
to observe a string $x \not\in B_{\test}$ when measuring subsystem $A$ of 
$\ket{\varphi_{AE}}$ in basis~$\hat{\theta}$. Furthermore, using elementary 
properties of the trace norm and Jensen's inequality gives 
\begin{align*}
\delta\bigl( \rho_{\Test AE},\tilde{\rho}_{\Test AE} \bigr)^2 &= \bigg( \sum_{\test} P_{\Test}(test) \,\delta\bigl( \proj{\varphi_{AE}}, \proj{\tilde{\varphi}^{\test}_{AE}} \bigr) \bigg)^2 \\
&= \bigg( \sum_{\test} P_{\Test}(test) \, |\eps_{\test}^\perp| \bigg)^2 
\leq \sum_{\test} P_{\Test}(test) \, |\eps_{\test}^\perp|^2 \, , 
\end{align*}
where the last term is the probability to observe a string $x \not\in B_{\test}$ when choosing $\test$ according to $P_{\Test}$ and measuring subsystem $A$ of $\ket{\varphi_{AE}}$ in basis $\hat{\theta}$. This situation, though, is a classical sampling problem, for which it is well known that for any measurement outcome $x$, the probability (over the choice of $\test$) that $x \not\in B_{\test}$ is negligible in $m$ (see~e.g.~\cite{Hoeffding63}). 
\qed
\end{proof}
In combination with Lemma~\ref{lemma:SmallSuperpos} on ``small
superpositions of product states'', and writing $h$ for the binary
entropy function $h(\mu) = -\big(\mu\log(\mu)+(1-\mu)\log(1-\mu)\big)$
as well as using that $\big|\{ y \in \{0,1\}^n \,|\, d_H(y,\hat{y})
\leq \mu n \}\big| \leq 2^{h(\mu) n}$ for any $\hat{y} \in
\set{0,1}^n$ and $0 \leq \mu \leq \frac12$, we can conclude the
following.

\begin{corollary}
\label{corSerge}
Let 
$\tilde{\rho}_{\Test AE}$ be of the form as in Lemma~\ref{lemma:idealstate} (for given $\eps$, $\hat{x}$ and $\hat{\theta}$). 
For any fixed $\test = (T,T')$ and for any fixed $x|_T \in \{0,1\}^{\alpha m}$ with \smash{$\err \assign r_H(x|_{T'},\hat{x}|_{T'}) \leq \frac12$}, let $\ket{\psi_{AE}}$ be the state to which \smash{$\ket{\tilde{\varphi}^{\test}_{AE}}$} collapses when for every $i \in T$ subsystem $A_i$ is measured in basis $\hat{\theta}_i$ and $x_i$ is observed, where we understand $A$ in $\ket{\psi_{AE}}$ to be restricted to the registers $A_i$ with $i \in \bar T$. 
Finally, let $\sigma_E = \tr_A(\proj{\psi_{AE}})$ and let the random variable $X$ describe the outcome when measuring the remaining $n = (1-\alpha) m$ subsystems of $A$ in basis $\theta|_{\bar{T}} \in \set{+,\times}^n$. 
Then, for any subset $I \subseteq \set{1,\ldots,n}$ and any $x|_I$,%
\footnote{Below, $\theta|_I$ (and similarly $\hat{\theta}|_I$) should be understood as first restricting the $m$-bit vector $\theta$ to $\bar{T}$, and then restricting the resulting $n$-bit vector $\theta|_{\bar{T}}$ to $I$: $\theta|_I\assign {(\theta|_{\bar{T}})|}_I$.  }
$$
\Hmin\bigl(X|_I \,\big|\,X|_{\bar{I}} = x|_{\bar{I}}\bigr) \geq d_H\bigl(\theta|_I,\hat{\theta}|_I\bigr) - h(\err + \eps) n
\quad\text{and}\quad
\Hmax\bigl(\sigma_E\bigr) \leq h(\err + \eps) n \, .
$$
\end{corollary}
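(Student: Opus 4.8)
The plan is to derive Corollary~\ref{corSerge} from Lemma~\ref{lemma:idealstate} together with Lemma~\ref{lemma:SmallSuperpos}, by tracking how the ``good'' state $\ket{\tilde\varphi^{\test}_{AE}}$ behaves under the partial measurement of the registers $A_i$ with $i \in T$. First I would fix $\test = (T,T')$ and an observed outcome $x|_T$ with $\err = r_H(x|_{T'},\hat x|_{T'}) \leq \tfrac12$, and examine the collapsed state $\ket{\psi_{AE}}$. Since $\ket{\tilde\varphi^{\test}_{AE}} = \sum_{x \in B_{\test}} \alpha^{\test}_x \ket{x}_{\hat\theta}\ket{\psi^x_E}$ is already written in the $\hat\theta$-basis on $A$, measuring the $A_i$ with $i \in T$ in basis $\hat\theta_i$ simply selects those terms whose $T$-part equals the observed $x|_T$, leaving (after renormalization, and now with $A$ restricted to $\bar T$)
$$
\ket{\psi_{AE}} = \sum_{y} \beta_y \ket{y}_{\hat\theta|_{\bar T}} \ket{\psi^{y}_E},
$$
where $y$ ranges over $\{ y \in \{0,1\}^n \,|\, (x|_T, y) \in B_{\test} \}$ — here I am identifying an $m$-bit string in $B_{\test}$ with the pair consisting of its $T$-part and its $\bar T$-part. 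The key combinatorial point is to identify this index set: by the definition of $B_{\test}$, the condition $(x|_T,y)\in B_{\test}$ reads $r_H(y|_{(\cdot)}, \hat x|_{(\cdot)}) \leq \err + \eps$ on the $\bar T$-coordinates, i.e. it is contained in the Hamming ball $J := \{ y \in \{0,1\}^n \,|\, d_H(y, \hat x|_{\bar T}) \leq (\err+\eps) n \}$ (using that $\bar T$ is exactly the index set of the remaining $n = (1-\alpha)m$ positions, and the restriction conventions fixed in the footnote). Hence $|J| \leq 2^{h(\err+\eps)n}$ by the stated volume bound, provided $\err + \eps \leq \tfrac12$ (a case distinction handles $\err+\eps > \tfrac12$, where the bounds are vacuous since $h(\err+\eps)n \geq n \geq \Hmax$ and $\geq d_H(\cdot,\cdot)$, the latter because Hamming distance on $n$-bit strings is at most $n$).

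With $\ket{\psi_{AE}}$ now exhibited as a superposition of at most $|J| \leq 2^{h(\err+\eps)n}$ product vectors $\ket{y}_{\hat\theta|_{\bar T}}\ket{\psi^y_E}$, I would apply Lemma~\ref{lemma:SmallSuperpos} directly. Part~\ref{it:boundHo} gives $\Hmax(\sigma_E) = H_0(\tr_A \proj{\psi_{AE}}) \leq \log|J| \leq h(\err+\eps)n$, which is the second bound. For the min-entropy bound, I would first consider the ``ideal'' state $\tilde\rho_{AE} = \sum_y |\beta_y|^2 \proj{y}_{\hat\theta|_{\bar T}} \otimes \proj{\psi^y_E}$, for which $X$ (the outcome of measuring $A|_{\bar T}$ in basis $\theta|_{\bar T}$) behaves exactly as in the honest case: conditioned on $X|_{\bar I} = x|_{\bar I}$, each coordinate $i \in I$ with $\theta_i \neq \hat\theta_i$ contributes a uniformly random, independent bit (since the corresponding qubit is in an $\hat\theta_i$-eigenstate and measured in the conjugate basis), so $\Hmin(\tilde X|_I \mid \tilde X|_{\bar I} = x|_{\bar I}) = d_H(\theta|_I, \hat\theta|_I)$. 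Then Lemma~\ref{lemma:SmallSuperpos}\ref{it:boundHoo}, applied with $\ket{\varphi_{AE}}$ there instantiated by $\ket{\psi_{AE}}$, the basis $\{\ket{i}\}$ taken to be the $\hat\theta|_{\bar T}$-basis, and the measurement basis $\{\ket{w}\}$ the $\theta|_{\bar T}$-basis (restricted appropriately to $I$ with $\bar I$ conditioned out), yields $\Hmin(X|_I \mid X|_{\bar I} = x|_{\bar I}) \geq \Hmin(\tilde X|_I \mid \tilde X|_{\bar I} = x|_{\bar I}) - \log|J| \geq d_H(\theta|_I,\hat\theta|_I) - h(\err+\eps)n$, as desired.

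I expect the main obstacle to be bookkeeping rather than a deep idea: carefully lining up the three nested restriction operations (from $m$-bit to $\bar T$ to $I$, with the footnote convention), verifying that collapsing $\ket{\tilde\varphi^{\test}_{AE}}$ on the $T$-registers really does just truncate the superposition to the ball $J$ on the remaining $n$ coordinates (this uses crucially that $\ket{\tilde\varphi^{\test}_{AE}}$ is written in the $\hat\theta$-basis, so the measurement is ``diagonal''), and making sure Lemma~\ref{lemma:SmallSuperpos}\ref{it:boundHoo} is applied with the conditioning on $X|_{\bar I}$ handled correctly — formally one should first measure $A|_{\bar I}$, condition on the outcome $x|_{\bar I}$, note this only shrinks the index set further (so $|J|$ is still an upper bound), and then invoke the lemma on the post-measurement state. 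None of these steps is hard, but getting the index set $J$ and its cardinality bound exactly right is where care is needed; the probabilistic/approximation content has already been absorbed into Lemma~\ref{lemma:idealstate}, so this corollary is purely a deterministic, linear-algebraic consequence.
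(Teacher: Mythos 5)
Your proposal matches the paper's proof essentially verbatim: collapse $\ket{\tilde\varphi^{\test}_{AE}}$ on the $T$-registers to exhibit $\ket{\psi_{AE}}$ as a superposition over (a subset of) the Hamming ball of radius $(\err+\eps)n$ around $\hat{x}|_{\bar T}$, bound its cardinality by $2^{h(\err+\eps)n}$, apply Lemma~\ref{lemma:SmallSuperpos} for both the max- and min-entropy bounds via the classical mixture $\tilde\rho_{AE}$, and handle general $I$ by first measuring $A|_{\bar I}$, which only shrinks the superposition. One small slip: for $\err+\eps>\tfrac12$ the bounds are \emph{not} vacuous, since $h(\mu)<1$ for $\mu\neq\tfrac12$ and so $h(\err+\eps)n<n$; but the paper tacitly assumes $\err+\eps\le\tfrac12$ as well (that is the regime in which it is used), so this is a shared omission rather than a gap specific to your argument.
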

Thus, the number of errors between the measured $x|_{T'}$ and the given $\hat{x}|_{T'}$ gives us a bound on the min-entropy of the outcome when measuring the remaining subsystems of $A$, and on the max-entropy of the state of subsystem $E$. 

 \begin{proof}
 To simplify notation, we write $\vartheta = \theta|_{\bar{T}}$ and $\hat{\vartheta} = \hat{\theta}|_{\bar{T}}$.  
 By definition of $\tilde{\rho}_{\Test AE}$, for any fixed values of $\eps,\hat{x}$, and $\hat{\theta}$, the state $\ket{\psi_{AE}}$ is of the form  $\ket{\psi_{AE}} = \sum_{y \in {\cal Y}} \alpha_y
 \ket{y}_{\hat{\vartheta}} \otimes \ket{\psi^y_E}$, where ${\cal Y} = \Set{y \in \set{0,1}^n}{d_H(y,\hat{x}|_{\bar{T}}) \leq \err + \eps}$. 
 Consider the corresponding mixture $\tilde{\sigma}_{AE} = \sum_{y \in {\cal Y}} 
 |\alpha_y|^2 \ket{y}_{\hat{\vartheta}} \bra{y}_{\hat{\vartheta}}
 \otimes\proj{\psi_E^y}$ and define $\tilde{X}$ as the random variable 
 for the outcome when measuring register $A$ of $\tilde{\sigma}_{AE}$ 
 in basis $\vartheta$. Notice that 
 $H_{\infty}(\tilde{X})\geq d_H(\vartheta,\hat{\vartheta})$ 
 since any state $\ket{y}_{\hat{\vartheta}}$, when measured in basis $\vartheta$, 
 produces a random bit for every position $i$ with $\vartheta \neq \hat{\vartheta}$. 
 Lemma~\ref{lemma:SmallSuperpos} allows us to conclude that 
 $H_{\infty}(X) \geq H_{\infty}(\tilde{X})-\log{|{\cal Y}|} \geq  
 d_H(\vartheta,\hat{\vartheta})-h(err+\eps)n$ and
 $H_0(\sigma_E)\leq \log |{\cal Y}| \leq h(err+\eps)n$.  
 This proves the claim for $I = \set{1,\ldots,n}$. For arbitrary $I \subset \set{1,\ldots,n}$ and $x|_I$, we can consider the pure state obtained by measuring the registers $A_i$ with $i \not\in I$ in basis $\vartheta_i$ when $x|_{\bar{I}}$ is observed. This state is still a superposition of at most $|{\cal Y}|$ vectors and thus we can apply the exact same reasoning to obtain \eqref{eq:staterequirements}. 
 \qed
 \end{proof}

The claim to be shown now follows by combining Lemma~\ref{lemma:idealstate} and Corollary~\ref{corSerge}. 
Indeed, the ideal state $\rho_{Ideal}$ we promised is produced by putting $\A$ and $\dB$ in the state $\tilde{\rho}_ {Test A E}$ defined in Lemma~\ref{lemma:idealstate}, and running Steps 2 and 3 of the verification phase. This state is negligibly close to the real state since by Lemma~\ref{lemma:idealstate} we were negligibly close to the real state before these operations. 
Corollary~\ref{corSerge} guarantees that \eqref{eq:staterequirements} is satisfied.

\section{Extensions and Generalizations}

 \subsection{In the Presence of Noise} \label{sec:Noise}
 In the description of the compiler $\compile$ and in its analysis, we
 assumed the quantum communication to be noise-free. Indeed, if the
 quantum communication is noisy honest Alice is likely to reject an
 execution with honest Bob. It is straightforward to generalize the
 result to noisy quantum communication: In Step~\ref{step:check} in the
 verification phase of $\compile(\Pi)$, Alice rejects and aborts if the
 relative number of errors between $x_i$ and $\hat{x}_i$ for $i \in T$
 with $\theta_i = \hat{\theta}_i$ exceeds the error probability $\phi$
 induced by the noise in the quantum communication by some small $\eps'
 > 0$. By Hoeffding's inequality~\cite{Hoeffding63}, this guarantees
 that honest Alice does not reject honest Bob except with exponentially
 small probability. Furthermore, proving the security of this
 ``noise-resistant'' compiler goes along the exact same lines as for
 the original compiler. The only difference is that when applying
 Corollary~\ref{corSerge}, the parameter $\err$ has to be chosen as
 $\err = \phi + \eps'$, so that \eqref{eq:staterequirements} holds for
 $\beta = h(\err+\eps) = h(\phi+\eps'+\eps)$ and thus the claim of
 Theorem~\ref{thm:Compiler} hold for any $\beta > h(\phi)$ (by choosing
 $\eps,\eps' > 0$ small enough).  This allows us to generalize the
 results from the Section~\ref{sec:Examples} to the setting of noisy
 quantum communication.

\subsection{Bounded-Quantum-Storage Security}\label{sec:HybridSecurity}
In this section we show that our compiler preserves security in the bounded-quantum-storage model (BQSM). In this model, one of the players (Bob in our case) is assumed be able to store only a limited number of qubits beyond a certain point in the protocol. BQSM-secure OT and identification protocols are known \cite{DFRSS07,DFSS07}, but they can be efficiently broken if the memory bound does not hold. Therefore, by the theorem below, applying the compiler produces protocols with better security, namely the adversary needs large quantum storage {\em and} large computing power to succeed.

Consider a  BB84-type protocol $\Pi$, and for a constant $0 < \gamma < 1$, let $\dBobBQSM^\gamma(\Pi)$ be the set of dishonest players $\dB$ that store only $\gamma n$ qubits after a certain point in $\Pi$, where $n$ is the number of qubits sent initially. 
Protocol $\Pi$ is said to be unconditionally secure against $\gamma$-BQSM Bob, if it satisfies Definition~\ref{def:qmemoryBob} with the restriction that the quantification is over all dishonest $\dB \in \dBobBQSM^\gamma(\Pi)$. 

\begin{theorem}\label{thm:BQSM}
  If $\Pi$ is unconditionally secure against $\gamma$-BQSM Bob, then
  $\compile(\Pi)$ (for an $0<\alpha<1$) is unconditionally secure
  against $\gamma(1\!-\!\alpha)$-BQSM Bob.
\end{theorem}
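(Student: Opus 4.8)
The plan is to recycle, almost verbatim, the reduction from the proof of Theorem~\ref{thm:Compiler}, replacing the appeal to benign security by an appeal to BQSM security and dropping the computational key-switching step (which is unavailable here, a BQSM adversary being computationally unbounded). Given any dishonest Bob $\dB \in \dBobBQSM^{\gamma(1-\alpha)}(\compile(\Pi))$, I would split honest Alice of $\compile(\Pi)$ into $\A_\circ$, who plays honest Alice of $\Pi$, and an intermediary $\tilde{\A}$: on receiving the $n$ qubits from $\A_\circ$, $\tilde{\A}$ produces $\alpha n/(1-\alpha)$ fresh random BB84 qubits, interleaves them at random with the $n$ received ones, forwards the resulting $m = n/(1-\alpha)$ qubits to $\dB$, runs the verification phase of $\compile(\Pi)$ with $\dB$ (requesting the commitments on its own decoy positions to be opened and aborting if a check fails), and, if verification succeeds, lets $\A_\circ$ finish the protocol with $\dB$. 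Bundling $\tilde{\A}$ with $\dB$ defines a dishonest Bob $\dB_\circ$ for $\Pi$; unlike in Theorem~\ref{thm:Compiler}, no auxiliary $\hat{\theta}$ output and no unconditionally binding key are needed, so $\dB_\circ$ simply samples the commitment key of $\compile(\Pi)$ internally (absorbing the reference string into its own computation) and outputs whatever $\dB$ outputs. Since exactly the same computation is carried out in an execution of $\compile(\Pi)$ between $\A$ and $\dB$ and in an execution of $\Pi$ between $\A_\circ$ and $\dB_\circ$, only grouped differently, we get $out_{\A,\dB}^{\compile(\Pi)} = out_{\A_\circ,\dB_\circ}^{\Pi}$.

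The crucial point is to check that $\dB_\circ$ really belongs to $\dBobBQSM^{\gamma}(\Pi)$, i.e.\ that it obeys the $\gamma n$-qubit memory bound of $\Pi$. I would identify the point in $\compile(\Pi)$ after which $\dB$ is memory bounded with the point in $\Pi$ after which $\dB_\circ$ must be memory bounded; this is a point lying no earlier than the end of the verification phase of $\compile(\Pi)$, which from the vantage point of $\Pi$ is merely part of $\dB_\circ$'s local processing and hence still inside $\Pi$'s unbounded-memory phase. At that point $\tilde{\A}$ holds no qubits at all: its decoy qubits have been handed to $\dB$ and then discarded, and everything it still needs (decoy bases and bits, the set of decoy positions, the randomness of the opened commitments, the commitment key) is classical. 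Consequently the quantum memory of $\dB_\circ$ at that point equals that of $\dB$ in $\compile(\Pi)$, which by assumption is at most $\gamma(1-\alpha)\,m = \gamma n$ qubits; thus $\dB_\circ \in \dBobBQSM^{\gamma}(\Pi)$.

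Finally I would assemble the pieces: by the assumed unconditional security of $\Pi$ against $\gamma$-BQSM Bob (Definition~\ref{def:qmemoryBob} restricted to $\dBobBQSM^{\gamma}(\Pi)$) there is an ideal-world adversary $\dhB$ with $out_{\A_\circ,\dB_\circ}^{\Pi} \approxs out_{\hA,\dhB}^{\F}$ for every legitimate input state, and combining this with $out_{\A,\dB}^{\compile(\Pi)} = out_{\A_\circ,\dB_\circ}^{\Pi}$ yields $out_{\A,\dB}^{\compile(\Pi)} \approxs out_{\hA,\dhB}^{\F}$, which is precisely what Definition~\ref{def:qmemoryBob} restricted to $\dBobBQSM^{\gamma(1-\alpha)}(\compile(\Pi))$ requires. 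I expect the main obstacle to be the memory accounting in the second paragraph: one must argue with care that the memory-bound point of $\Pi$ can indeed be matched to a point of $\compile(\Pi)$ occurring after the verification phase, and that injecting the decoy qubits and running the commit-and-open, all of it internal to $\dB_\circ$, neither causes $\dB$'s $\gamma n$ stored qubits to be counted twice nor makes them appear at an inconvenient moment relative to $\Pi$'s memory phase. Everything else is a direct transcription of the proof of Theorem~\ref{thm:Compiler}.
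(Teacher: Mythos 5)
Your proof matches the paper's argument: you construct $\dB_\circ$ by absorbing the intermediary $\tilde{\A}$ (exactly as in the proof of Theorem~\ref{thm:Compiler}), have $\dB_\circ$ sample the hiding key $\pkH$ internally rather than switching to $\pkB$, observe that $out_{\A,\dB}^{\compile(\Pi)} = out_{\A_\circ,\dB_\circ}^{\Pi}$, and verify the memory accounting $\gamma(1-\alpha)m = \gamma n$, which is the paper's reasoning (``$\dB_\circ$ requires the same amount of quantum storage as $\dB$ but communicates an $\alpha$-fraction fewer qubits''). Your more explicit treatment of aligning the memory-bound point of $\Pi$ with that of $\compile(\Pi)$ and of $\tilde{\A}$ holding no qubits at that point makes precise what the paper leaves to ``it follows by construction''.
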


\begin{proof}
  Exactly as in the proof of Theorem~\ref{thm:Compiler}, given
  dishonest Bob $\dB$ attacking $\compile(\Pi)$, we construct
  dishonest Bob $\dB_\circ$ attacking the original protocol $\Pi$. The
  only difference here is that we let $\dB_\circ$ generate the CRS
  ``correctly'' as $\pkH$ sampled according to $\GH$. It follows by
  construction of $\dB_\circ$ that $out^{\compile(\Pi)}_{\A,\dB} =
  out^{\Pi}_{\A_\circ, \dB_\circ} \, .$ Also, it follows by
  construction of $\dB_\circ$ that if $\dB \in
  \dBobBQSM^{\gamma(1-\alpha)}(\compile(\Pi))$ then $\dB_\circ \in
  \dBobBQSM^\gamma(\Pi)$, since $\dB_\circ$ requires the same amount
  of quantum storage as $\dB$ but communicates an $\alpha$-fraction
  fewer qubits. It thus follows that there exists $\dhB$ such that
  $out^{\Pi}_{\A_\circ, \dB_\circ} \stackrel{s}{\approx} out^{\cal
    F}_{\hA,\dhB} \, .$ This proves the claim.  \qed
\end{proof}

\subsection{Efficient Simulation} \label{sec:efficientsimulation}
The security definitions we use here are clearly closely related to the UC-security
concept in that they require a protocol to implement a certain functionality, and that this can be demonstrated via a simulation argument. However, our definitions do not imply UC-security.
For this we would need all simulators to be efficient, and 
our definition of unconditional security against dishonest Alice does not require this (unlike the definition of computational security against Bob).

Of course, it might still be the case that our compilation preserves efficiency of the simulator, namely if protocol $\Pi$ is secure against dishonest Alice with efficient simulator $\dhA$, 
then so is $\compile(\Pi)$. 

Although this would be desirable, it does not seem to be the case for our basic construction: In order to show such a result, we would need to simulate the preprocessing phase against dishonest $\dA$ efficiently and without measuring the qubits that are not  ``opened during'' preprocessing. Once this is done, we can give the remaining qubits to $\dhA$ who can simulate the rest of the protocol.

However, the whole point of the preprocessing is to ensure that Bob measures all qubits, unless he can break the binding property of the commitments, so the only hope is to bring the simulator in a situation where it can make commitments and open them any way it wants. The standard way to do this is to give the simulator some trapdoor information related to the common reference string, that Bob would not have in real life. Indeed, with such a trapdoor commitment scheme, simulation of the preprocessing is trivial: We just wait until Alice reveals
the bases and the test subset, measure qubits in the test subset, and open the commitments according to the measurement results.

While no such trapdoor is known for the commitment scheme we suggested earlier, it is possible to extend the construction efficiently to build in such a trapdoor:

To do this, we need a new ingredient, namely a relation $R$ representing a hard problem, and a 
$\Sigma$-protocol for $R$.  The relation is a set of pairs $R= \{ (u,w) \}$ where $u$ can be thought of as a problem instance and $w$ as the solution. The relation is hard if one can efficiently generate 
$(u,w) \in R$ such that from $u$ one cannot in polynomial time compute $w$ such that $(u,w)\in R$.
We need that $R$ is hard even for quantum algorithms. We also need that there is a $\Sigma$-protocol, i.e. an honest verifier perfect zero-knowledge interactive proof of knowledge where a prover shows, on input $u$, that he knows $w$ such that $(u,w)\in R$. Protocol conversations have form
$(a,b,z)$ where the prover sends $a$, the verifier gives a random challenge bit $b$ and the prover sends $z$. It is required that, given conversations $(a,0,z_0), (a,1,z_1)$ that the verifier would accept, one can compute $w$ such that $(u,w)\in R$.

As an example, one can think of  $u= (G_0,G_1)$ where $G_0,G_1$ are isomorphic graphs and $w$ is an isomorphism. The $\Sigma$- protocol is just the well-known standard zero-knowledge proof for graph isomorphism. There are several plausible and practically more useful examples, see \cite{DFS04}. 

Given this, and a commitment scheme with public key $\pkH$ as described above, we build a new commitment scheme as follows: the public key is $u, \pkH$. To commit to a bit $b$, the committer runs the honest verifier simulator to get a conversation $(a,b,z)$. The commitment is now
$a, c_0,c_1$, where $c_b= \commit(z,r)$ and $c_{1-b} = \commit(0,r')$.
To open a commitment, one reveals $b$ and opens $c_b$. The receiver checks that $(a,b,z)$ is accepting and that $c_b$ was correctly opened.

By perfect honest verifier zero-knowledge and perfect hiding of commitments based on $\pkH$, 
the new commitment is perfectly hiding. However, if one knows $w$ such that $(u,w)\in R$, one can 
compute $(a,0,z_0)$ and  $(a,1,z_1)$ both of which are accepting conversations, and set
$c_0= \commit(z_0,r_0)$, $c_1= \commit(z_1,r_1)$, and it is now possible to open both ways. 
Hence $w$ serves as the trapdoor we need for efficient simulation above. 

The new commit scheme still has the property we need for the compilation, namely one can choose the public key in a different but indistinguishable way, such that the committed bit can be extracted: we let the public key be $u,\pkB$, where $\pkB$ is a binding public key for our original scheme. Now, given a commitment
$(a,c_0,c_1)$, we can decrypt $c_0,c_1$ to see which of them contains a valid reply in the $\Sigma$-protocol. The only way we can fail to predict how the commitment can be opened is if both $c_0$ and $c_1$ contain valid replies. But this would imply that the committer can compute $w$, so for a polynomial-time bounded committer, this only happens with negligible probability, since the relation is assumed to be hard.

\subsection{Doing without a Common Reference String} 
\label{sec:CRS}
We can get rid of the CRS assumption by instead generating a reference string from scratch
using a coin-flip protocol. In~\cite{DL09arxiv}, such a coin-flip protocol
is described and proved secure against quantum adversaries using
Watrous' quantum rewinding method~\cite{Watrous06}.
Note that for our compiler, we want the CRS to be an
unconditionally hiding public key, and when using Regev's
cryptosystem, a uniformly random string (as output by the coin-flip)
does indeed determine such a key, except with negligible probability.

\section{Applications}\label{sec:Examples}

\subsection{Oblivious Transfer}
 \label{sec:QOT}
 \newcommand{\otm}[1]{1-2 OT$^{#1}$}
\newcommand{\fot}{\ensuremath{{\cal F}_{\OT{\ell}}}}
 We discuss a protocol that securely implements one-out-of-two
 oblivious transfer of strings of length~$\ell$ (i.e. \otm{\ell}). 
 In \otm{\ell},
 the sender $\A$ sends two $l$-bit strings $s_0$ and $s_1$
 to the receiver $\B$. $\B$ can choose which string to receive ($s_k$)
 but does not learn anything about the other one ($s_{1-k}$). On the
 other hand, $\A$ does not learn $\B$'s choice bit $k$.
 The protocol is almost identical 
 to the \otm{1} introduced in~\cite{BBCS91}, but uses hash functions instead of parity
 values to mask the inputs $s_0$ and $s_1$. The resulting scheme, called $\QOT$, is presented
 in Figure~\ref{fig:ot}, where $\cal F$ denotes a suitable family of universal hash functions with range $\set{0,1}^\ell$ (as specified in~\cite{DFRSS07}). We assume that $\ell = \lfloor \lambda n \rfloor$ for some constant $\lambda > 0$. 

 \begin{figure}
 \begin{framed}
 \noindent\hspace{-1.5ex} {\sc Protocol $\QOT:$ } \\[-4ex]
 \begin{description}\setlength{\parskip}{0.5ex}
 \item[{\it Preparation:}] $\A$ chooses $x \in_R \set{0,1}^n$ and 
 $\theta \in_R \set{+,\x}^n$ and sends $\ket{x}_{\theta}$ to~$\B$, and $\B$ 
 chooses $\hat{\theta} \in_R \set{0,1}^n$ and obtains $\hat{x} \in \set{0,1}^n$ 
 by measuring $\ket{x}_{\theta}$ in basis $\hat{\theta}$. 
 \item[{\it Post-processing:}]\begin{enumerate}
 \item
 $\A$ sends $\theta$ to $\B$. 
 \item
 $\B$ partitions all positions $1\leq i \leq n$ in 
 two subsets according to his choice bit $k \in \{ 0,1 \}$: the ``good'' subset $I_k \assign \{ i: \theta_i = \hat{\theta}_i \}$ and the ``bad'' subset $I_{1-k} \assign \{ i: \theta_i \neq \hat{\theta}_i \}$. $\B$ sends $(I_0,I_1)$
 to $A$.
 \item
 $\A$ sends descriptions of
  $f_0,f_1\in_R {\cal F}$ 
 together with 
 $m_0 \assign s_0 \oplus f_0(x|_{I_0})$ and $m_1 \assign s_1 \oplus f_1(x|_{I_1})$.
 \item 
 $\B$ computes $s_k = m_k \oplus f_k(\hat{x}|_{I_k})$.
 \end{enumerate}
 \end{description}
 \vspace{-1.5ex}
 \end{framed}
 \vspace{-1.5ex}
  \caption{Protocol for String OT.}
  \label{fig:ot} 
 \end{figure}

 \begin{theorem}\label{thm:OT}
 Protocol $\QOT$ is unconditionally secure against $\beta$-benign Bob for any $\beta < \frac18 - \frac{\lambda}{2}$. 
 \end{theorem}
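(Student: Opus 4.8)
\medskip
\noindent\textbf{Proof proposal.} Correctness of $\QOT$ is immediate, so the plan is to verify the Bob-direction of Definition~\ref{def:qmemoryBob} for every $\beta$-benign $\dB$ by exhibiting an ideal-world adversary $\dhB$ (running in time polynomial in that of $\dB$) with $out_{\A,\dB}^{\QOT} \approxs out_{\hA,\dhB}^{\fot}$, and reducing this to privacy amplification. By Definition~\ref{def:BenignBob}, after the preparation phase $\dB$ either aborts (then so does $\dhB$) or outputs $\hat\theta \in \set{+,\x}^n$, and the joint state of $\A,\dB$ is statistically close to one for which~\eqref{eq:staterequirements} holds with parameter $\beta$; since a negligible change of state affects the final statistical distance only negligibly, I would assume~\eqref{eq:staterequirements} from now on, and write $\rho_{ER}$ for the pointwise purification (w.r.t.\ $X$) of $\dB$'s quantum register, so $\Hmax(\rho_{ER}) \le \beta n$.

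The simulator $\dhB$ runs $\dB$ internally while playing honest Alice's part of $\QOT$: it samples $x \in_R \set{0,1}^n$ and $\theta \in_R \set{+,\x}^n$, sends $\ket{x}_\theta$, receives $\hat\theta$, sends $\theta$, and receives the partition $(I_0,I_1)$ from $\dB$. It then lets Bob's choice bit be the index $k$ for which $d_H(\theta|_{I_k},\hat\theta|_{I_k})$ is smaller, calls $\fot$ on input $k$ to obtain $s_k$, samples $f_0,f_1 \in_R \F$, sets $m_k \assign s_k \oplus f_k(x|_{I_k})$, picks $m_{1-k}$ uniformly at random, sends $(f_0,f_1,m_0,m_1)$ to $\dB$, and outputs whatever $\dB$ outputs. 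Honest Alice produces no output, so $out_{\A,\dB}^{\QOT}$ and $out_{\hA,\dhB}^{\fot}$ agree on everything except that in the real execution $\dB$ sees $m_{1-k} = s_{1-k}\oplus f_{1-k}(x|_{I_{1-k}})$ whereas $\dhB$ feeds it a uniformly random string; hence it suffices to prove that $f_{1-k}(X|_{I_{1-k}})$ is statistically close to uniform given $f_{1-k}$, the remaining classical transcript (notably $\theta$, $\hat\theta$, $(I_0,I_1)$, $f_k$, $m_k$, $s_0$, $s_1$), and $\dB$'s quantum register $E$.

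This is exactly privacy amplification against a quantum adversary, and the threshold on $\beta$ drops out of two estimates. First, a min-entropy bound on $X|_{I_{1-k}}$: since $(I_0,I_1)$ partitions $\set{1,\ldots,n}$, $d_H(\theta|_{I_0},\hat\theta|_{I_0}) + d_H(\theta|_{I_1},\hat\theta|_{I_1}) = d_H(\theta,\hat\theta)$, and as $\theta$ is uniform and independent of $\hat\theta$ (output before $\theta$ was revealed), Hoeffding's inequality~\cite{Hoeffding63} gives $d_H(\theta,\hat\theta) \ge (\frac12 - \eps)n$ except with probability negligible in $n$; by the choice of $k$ this forces $d_H(\theta|_{I_{1-k}},\hat\theta|_{I_{1-k}}) \ge (\frac14 - \frac\eps2)n$, and~\eqref{eq:staterequirements} with $I = I_{1-k}$ (so $\bar I_{1-k} = I_k$) then gives $\Hmin\bigl(X|_{I_{1-k}}\,\big|\,X|_{I_k} = x|_{I_k}\bigr) \ge (\frac14 - \frac\eps2 - \beta)n$ for every $x|_{I_k}$; the rest of the classical transcript is a function of quantities conditioned on plus independent randomness and does not lower this. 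Second, the register $E$ enters only through $\Hmax(\rho_{ER}) \le \beta n$: since $\rho_{ER}$ is supported on a subspace of dimension at most $2^{\beta n}$, handing $E$ (and $R$) to the adversary costs at most $\beta n$ bits of min-entropy, which is precisely the conditional form of Lemma~\ref{lemma:SmallSuperpos}(\ref{it:boundHoo}) used as in the BQSM-secure OT analysis of~\cite{DFRSS07}. Hence the conditional min-entropy of $X|_{I_{1-k}}$ available for extraction against $\dB$'s full view is at least $(\frac14 - \frac\eps2 - 2\beta)n$, and since $\F$ is universal with range $\set{0,1}^\ell$, $\ell = \lfloor\lambda n\rfloor$, privacy amplification puts $f_{1-k}(X|_{I_{1-k}})$ within trace distance $\frac12\, 2^{-\frac12(\frac14 - \frac\eps2 - 2\beta - \lambda)n}$ of uniform given $\dB$'s view. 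This is negligible exactly when $\frac14 - 2\beta - \lambda > \frac\eps2$, and since $\eps > 0$ is arbitrary it holds for every $\beta < \frac18 - \frac\lambda2$, which proves the theorem.

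The step I expect to be the main obstacle is the treatment of the quantum register $E$: unlike in the bounded-quantum-storage model, the adversary's memory is not capped by an explicit number of qubits but only through the max-entropy of the \emph{pointwise} purification $\rho_{ER}$, and one has to argue carefully that this costs no more than $\Hmax(\rho_{ER}) \le \beta n$ bits of extractable min-entropy --- exactly what Lemma~\ref{lemma:SmallSuperpos} is designed to supply. A secondary point requiring care is the order of quantifiers: the bounds in~\eqref{eq:staterequirements} may be invoked only after $\theta$, $\hat\theta$ and the partition $(I_0,I_1)$ are fixed (so that $I_{1-k}$ is a legitimate, though $\theta$-dependent, choice of subset), and the conditioning ``$X|_{I_k} = x|_{I_k}$'' in~\eqref{eq:staterequirements} is exactly what lets $\dhB$ reveal $m_k$ without compromising the secrecy of $s_{1-k}$.
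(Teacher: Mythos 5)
Your proof is correct and follows essentially the same route as the paper's: the same simulator (run $\dB$ against a simulated $\A$, pick $k$ as the index with smaller Hamming distance $d_H(\theta|_{I_k},\hat\theta|_{I_k})$, query $\fot$ for $s_k$, replace $m_{1-k}$ by a uniformly random string), the same Hoeffding-based lower bound $d_H(\theta|_{I_{1-k}},\hat\theta|_{I_{1-k}}) \geq (\frac14 - O(\eps))n$, the same invocation of~\eqref{eq:staterequirements} with $I = I_{1-k}$ and of $\Hmax(\rho_{ER}) \leq \beta n$, and the same privacy-amplification step yielding $\lambda < \frac14 - 2\beta$, i.e.\ $\beta < \frac18 - \frac{\lambda}{2}$. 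You spell out some steps the paper leaves implicit (the additivity of the split Hamming distances, the explicit privacy-amplification bound, the fact that the rest of the classical transcript does not lower the conditional min-entropy), but there is no substantive deviation from the paper's argument.
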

 \begin{proof}
   Let $\fot$ be the ideal oblivious transfer functionality. For any
   given benign Bob $\dB$, we construct $\dhB$ the following way.
   $\dhB$ runs locally a copy of $\dB$ and simulates Alice by running
   $\A$ up to but not including Step~3.  After the preparation phase,
   $\dhB$ gets $\hat{\theta}$ since $\dB$ is benign.
   When the simulation of $\A$ reaches the point 
   just after the announcement of $f_0$ and $f_1$ in Step~3, 
   $\dhB$ finds $k'$ such that $d_H(\hat{\theta}|_{I_{k'}},\theta|_{I_{k'}})$ 
   is minimum
   for $k'\in\{0,1\}$. $\dhB$ then calls ${\fot}$ with input $k'$ and
   obtains output $s_{k'}$. $\dhB$ sets  
   $m'_{k'} = s_{k'} \oplus f_{k'}(x|_{I_{k'}})$ 
   and $m'_{1-k'} \in_R\{0,1\}^\ell$ before sending $(m_0,m_1)$
   to $\dB$. $\dhB$ then outputs whatever $\dB$ outputs.
 
 We now argue that the state output by $\dhB$ is statistically close to
 the state output by $\dB$ when executing $\QOT$ with the real $\A$.
 The only difference is that while $\dhB$ outputs
 $m'_{1-k'}\in_R\{0,1\}^\ell$, $\dB$ outputs $m_{1-k'} = s_{1-k'}
 \oplus f_{1-k'}(x|_{I_{1-k'}})$.  To conclude, we simply need to show
 that $m_{1-k'}$ is statistically indistinguishable from uniform from
 the point of view of $\dB$.  Note that since $\theta$ and
 $\hat{\theta}$ are independent and $\theta$ is a uniform $n$-bit
 string, we have that for any $\epsilon>0$, $d_H(\theta,\hat{\theta})>
 (1-\epsilon)n/2$, except with negligible probability. It
 follows that with overwhelming probability
 $d_H(\theta|_{I_{1-k'}},\hat{\theta}|_{I_{1-k'}})\geq (1-\epsilon)n/4$. Since $\dB$ is $\beta$-benign, 
 we have that $\Hoo\bigl(X|_{I_{1-k'}} \,\big|\, X|_{I_{k'}} = x|_{I_{k'}}\bigr) \geq 
 (1-\epsilon)n/4-\beta n$ and $H_0(\rho_E)\leq \beta n$
 which implies, from
 privacy amplification, 
 that $f_{1-k'}(x|_{I_{1-k'}})$ is statistically indistinguishable
 from uniform for $\dB$ provided $\frac{\ell}{n} < \frac{1}{4}-2\beta-\epsilon$
 for any $\epsilon>0$. We conclude that 
 $m_{1-k'}$ is statistically close to uniform. \qed 
 \end{proof}

By combining Theorem~\ref{thm:OT} with Theorem~\ref{thm:Compiler}, and the results of~\cite{DFRSS07} (realizing that the same analysis also applies to $\QOT$) with Theorem~\ref{thm:BQSM}, we obtain the following hybrid-security result. 

 \begin{corollary}
 Let $0<\alpha<1$ and $\lambda < \frac18$. Then protocol $\compile(\QOT)$ is computationally secure against dishonest Bob and unconditionally secure against $\gamma (1\!-\!\alpha)$-BQSM Bob with $\gamma < \frac14 - 2 \lambda$. 
 \end{corollary}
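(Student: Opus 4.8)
The plan is to combine the three results the corollary explicitly cites, with a short verification that the parameter ranges line up. First I would recall that Theorem~\ref{thm:OT} says $\QOT$ is unconditionally secure against $\beta$-benign Bob for any $\beta < \frac18 - \frac{\lambda}{2}$. Since the hypothesis $\lambda < \frac18$ makes the bound $\frac18 - \frac\lambda2$ strictly positive, there exists a constant $\beta > 0$ satisfying this inequality, so Theorem~\ref{thm:Compiler} applies: for any $\alpha > 0$, the compiled protocol $\compile(\QOT)$, instantiated with a dual-mode commitment scheme, is unconditionally secure against dishonest Alice and computationally secure against dishonest Bob in the CRS model. That gives the first half of the claimed conclusion directly.

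For the BQSM half, I would invoke Theorem~\ref{thm:BQSM}: if $\QOT$ is unconditionally secure against $\gamma$-BQSM Bob, then $\compile(\QOT)$ is unconditionally secure against $\gamma(1-\alpha)$-BQSM Bob. What remains is to justify that $\QOT$ is $\gamma$-BQSM-secure for $\gamma < \frac14 - 2\lambda$; this is where I would appeal to the results of~\cite{DFRSS07}, noting (as the text preceding the corollary does) that the analysis there for the parity-masked OT carries over verbatim to the hash-function version $\QOT$, since privacy amplification with universal hashing is exactly the tool used in that analysis. Concretely, the BQSM min-entropy bounds from \cite{DFRSS07} give that, against a Bob storing only $\gamma n$ qubits, the string $f_{1-k}(x|_{I_{1-k}})$ is statistically close to uniform provided $\frac{\ell}{n} = \lambda$ is below the relevant threshold determined by $\gamma$, which works out to $\gamma < \frac14 - 2\lambda$ — the direct BQSM analogue of the condition $\frac\ell n < \frac14 - 2\beta$ appearing in the proof of Theorem~\ref{thm:OT}, with the storage bound $\gamma$ playing the role of $\beta$.

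Putting the pieces together: for $\lambda < \frac18$ we get computational security against dishonest Bob from Theorems~\ref{thm:OT} and~\ref{thm:Compiler}; and for $\gamma < \frac14 - 2\lambda$ (which is a nonempty range precisely when $\lambda < \frac18$, consistent with the hypothesis) we get unconditional security against $\gamma(1-\alpha)$-BQSM Bob from~\cite{DFRSS07} and Theorem~\ref{thm:BQSM}. Since these two security guarantees concern the same protocol $\compile(\QOT)$ and are proved by independent arguments (one switching the CRS to $\pkB$ and reducing to benign security, the other keeping the CRS as $\pkH$ and reducing to BQSM security of $\Pi$), they hold simultaneously, establishing the hybrid-security statement.

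The main obstacle I anticipate is not any of the three black-box invocations but the bookkeeping claim that the $\QOT$ analysis of~\cite{DFRSS07} really does transfer to the hash-based variant and yields exactly the threshold $\gamma < \frac14 - 2\lambda$; one must check that replacing parity values by a universal hash family with range $\{0,1\}^\ell$ changes nothing essential in the privacy-amplification step and that the factor-$1/4$ loss (from $d_H(\theta|_{I_{1-k}},\hat\theta|_{I_{1-k}}) \geq (1-\epsilon)n/4$) combines with the memory bound $\gamma n$ in the same way the benign bound combined with $\beta n$. This is exactly the parenthetical ``realizing that the same analysis also applies to $\QOT$'' in the sentence preceding the corollary, and it is routine but is the only place where genuine verification, rather than citation, is required.
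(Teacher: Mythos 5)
Your proposal matches the paper's approach exactly: the paper proves the corollary by the same one-line combination of Theorem~\ref{thm:OT} (benign security of $\QOT$), Theorem~\ref{thm:Compiler} (the compiler), the BQSM analysis of \cite{DFRSS07} applied to $\QOT$, and Theorem~\ref{thm:BQSM}, together with the observation that $\lambda < \frac18$ leaves a positive $\beta < \frac18 - \frac{\lambda}{2}$ available. One small slip worth correcting: your parenthetical that the BQSM threshold $\gamma < \frac14 - 2\lambda$ is the ``direct analogue'' of $\ell/n < \frac14 - 2\beta$ ``with $\gamma$ playing the role of $\beta$'' does not actually check out---that literal substitution would give $\gamma < \frac18 - \frac{\lambda}{2}$, a different (and for small $\lambda$ strictly tighter) constraint---so the $\frac14 - 2\lambda$ bound must be read off from the BQSM privacy-amplification calculation in \cite{DFRSS07} directly rather than transplanted by analogy. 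Since you already defer to that reference for the exact threshold and correctly flag the transfer to $\QOT$ as the only point requiring genuine verification, this does not affect the structure of your argument.
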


\subsection{Password-Based Identification}
\label{sec:QID}

We want to apply our compiler to the quantum password-based identification scheme from~\cite{DFSS07}. Such an identification scheme allows a user $\A$ to identify herself to server $\B$ by means of a common (possibly non-uniform and low-entropy) password $w \in \cal W$, such that dishonest $\dA$ cannot delude honest server $\B$ with probability better then trying to guess the password, and dishonest $\dB$ learns no information on $\A$'s password beyond trying to guessing it and learn whether the guess is correct or not. 

In~\cite{DFSS07}, using quantum-information-theoretic security definitions, the proposed identification scheme was proven to be unconditionally secure against arbitrary dishonest Alice and against quantum-memory-bounded dishonest Bob. In~\cite{FS09} it was then shown that these security definitions imply simulation-based security as considered here, with respect to the functionality $\F_{\ID}$ given in Figure~\ref{fig:identfunc}.%
\footnote{Actually, the definition and proof from~\cite{DFSS07} guarantees security only for a slightly weaker functionality, which gives some unfair advantage to dishonest $\dA$ in case she guesses the password correctly; however, as discussed in~\cite{FS09}, the protocol from~\cite{DFSS07} does implement functionality $\F_{\ID}$. }

\begin{figure}
\begin{framed}
{\bf Functionality} $\F_{\ID}$: \; 
Upon receiving $w_A,w_B \in\cal W$ from user Alice and from server
Bob, respectively, $\F_\ID$ outputs the bit \smash{$y \assign (w_A \eqq w_B)$} to Bob. 
In case Alice is dishonest, she may choose $w_A = \,\perp$ (where
$\perp \, \not\in \cal W$). For any choice of $w_A$ the bit $y$ is
also output to dishonest Alice.
\vspace{-1ex}
\end{framed}
\vspace{-2ex}
\caption{The Ideal Password-Based Identification Functionality.}\label{fig:identfunc}
\vspace{-1ex}
\end{figure}

We cannot directly apply our compiler to the identification scheme as given in~\cite{DFSS07}, since it is {\em not} a BB84-type protocol. The protocol does start with a preparation phase in which Alice sends BB84 qubits to Bob, but Bob does not measure them in a random basis but in a basis determined by his password $w_B \in {\cal W}$; specifically, Bob uses as basis the encoding $\mathfrak{c}(w_B)$ of $w_B$ with respect to a code $\mathfrak{c}:{\cal W} \rightarrow \set{+,\x}^n$ with ``large" minimal distance. 
However, it is easy to transform the original protocol from~\cite{DFSS07} into a BB84-type protocol without affecting security: We simply let Bob apply a {\em random shift} $\kappa$ to the code, which Bob only announces to Alice in the post-processing phase, and then Alice and Bob complete the protocol with the shifted code. The resulting protocol $\QID$ is described in Figure~\ref{fig:ident}, where $\cal F$ and $\cal G$ are suitable families of (strongly) universal hash functions (we refer to~\cite{DFSS07} for the exact specifications). It is not hard to see that this modification does not affect security as proven in~\cite{DFSS07} (and~\cite{FS09}). 

\begin{figure}
\begin{framed}
\noindent\hspace{-1.5ex} {\sc Protocol $\QID:$ } \\[-4ex]
\begin{description}\setlength{\parskip}{0.5ex}
\item[{\it Preparation:}] $\A$ chooses $x \in_R \set{0,1}^n$ and $\theta \in_R \set{+,\x}^n$ and sends $\ket{x}_{\theta}$ to~$\B$, and $\B$ chooses $\hat{\theta} \in_R \set{0,1}^n$ and obtains $\hat{x} \in \set{0,1}^n$ by measuring $\ket{x}_{\theta}$ in basis $\hat{\theta}$. 
\item[{\it Post-processing:}] 
\begin{enumerate}
\item
$\B$ computes a string $\kappa\in \{ +, \times \}^n$ such that
$\hat{\theta}= \mathfrak{c}(w) \oplus \kappa$ (we think of $+$ as 0
and $\times$ as 1 so that $\oplus$ makes sense). He sends $\kappa$ to
$\A$ and we define $\mathfrak{c'}(w) \assign \mathfrak{c}(w) \oplus \kappa$. 
\item 
$\A$ sends $\theta$ and $f \in_R \mathcal{F}$ to $\B$. 
Both compute $I_w \assign \{ i: \theta_i=\mathfrak{c'}(w)_i \}$.
\item 
$\B$ sends $g \in_R \mathcal{G}$.
\item 
$\A$ sends $z\assign f(x|_{I_w}) \oplus g(w)$ to $\B$. 
\item $\B$ accepts if and only if $z=f(\hat{x}|_{I_w}) \oplus g(w)$.
\end{enumerate}
\end{description}
\vspace{-1.5ex}
\end{framed}
\vspace{-1.5ex}
 \caption{Protocol for Secure Password-Based Identification}
 \label{fig:ident} 
\end{figure}

\begin{theorem}\label{thm:ID}
If the code $\mathfrak{c}:{\cal W} \rightarrow \set{+,\x}^n$ can correct at least $ \delta n$ errors in polynomial-time for a constant $\delta$, then protocol $\QID$ is unconditionally secure against $\beta$-benign Bob for any $\beta < \frac{\delta}{4}$. 
\end{theorem}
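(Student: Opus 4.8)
The plan is to follow the proof of Theorem~\ref{thm:OT}. Given a $\beta$-benign dishonest Bob $\dB$ with $\beta < \delta/4$, I will construct an ideal-world adversary $\dhB$ for the functionality $\F_{\ID}$ of Figure~\ref{fig:identfunc} and then show $out_{\A,\dB}^{\QID} \approxs out_{\hA,\dhB}^{\F_{\ID}}$. The adversary $\dhB$ runs a local copy of $\dB$ while simulating honest Alice $\A$ of $\QID$ (picking $x$, $\theta$ and later $f$ itself), up to but not including the point where $\A$ sends $z$ in Step~4 of the post-processing. Since $\dB$ is benign, after the preparation phase $\dhB$ obtains from $\dB$ either an abort (in which case $\dhB$ aborts too and there is nothing to prove) or an auxiliary basis string $\hat\theta \in \set{+,\x}^n$. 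Once $\dB$ has sent $\kappa$, $\dhB$ sets $\mathfrak{c}'(\cdot) \assign \mathfrak{c}(\cdot) \oplus \kappa$ and runs the efficient decoder of the code on $\hat\theta \oplus \kappa$; if it returns a password $w'$ with $d_H(\hat\theta, \mathfrak{c}'(w')) \leq \delta n$ then $\dhB$ queries $\F_{\ID}$ on $w_B = w'$, and otherwise on some fixed $w_B \in {\cal W}$, obtaining a bit $y$. Finally $\dhB$ finishes the simulation, sending $\theta$ and $f$, receiving $g$, and sending in Step~4 the message $z \assign f(x|_{I_{w'}}) \oplus g(w')$ when the decoder succeeded and $y = 1$, and a uniformly random $z$ of the appropriate length otherwise; $\dhB$ then outputs whatever $\dB$ outputs. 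Note $\dhB$ runs in time polynomial in $\dB$'s, precisely because the code can be decoded in polynomial time, so Condition~2 of Definition~\ref{def:BenignBob} is met.

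For the indistinguishability, I would first use Definition~\ref{def:BenignBob} to replace, at the cost of a negligible error, the real post-preparation joint state of $\A, \dB$ by a nearby state satisfying~\eqref{eq:staterequirements}, and then fix $\theta$, $\hat\theta$, the classical information $z$ of $\dB$ (hence also $\kappa$ and the decoded $w'$), and Alice's password $w$. If the decoder succeeded and $y = 1$, then $w = w'$, and $\dhB$'s message $z = f(x|_{I_{w'}}) \oplus g(w') = f(x|_{I_w}) \oplus g(w)$ is exactly what honest $\A$ would send, so the simulation is perfect. In every other case -- the decoder failed, or it returned $w' \neq w$ -- I claim $\mathfrak{c}(w)$ is a codeword at Hamming distance more than $\delta n$ from $\hat\theta \oplus \kappa$: this is immediate when the decoder failed, and when it succeeded it follows from the triangle inequality and the fact that a code correcting $\delta n$ errors has minimum distance more than $2\delta n$, since then $d_H(\mathfrak{c}(w), \hat\theta \oplus \kappa) \geq d_H(\mathfrak{c}(w), \mathfrak{c}(w')) - d_H(\mathfrak{c}(w'), \hat\theta \oplus \kappa) > 2\delta n - \delta n$. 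Hence $d_H(\mathfrak{c}'(w), \hat\theta) > \delta n$ in all remaining cases.

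The quantitative core of the argument, and the step I expect to be the most delicate, is the entropy accounting that now follows. For $i \in I_w$ one has $\theta_i = \mathfrak{c}'(w)_i$, so $d_H(\theta|_{I_w}, \hat\theta|_{I_w}) = \big|\Set{i}{\mathfrak{c}'(w)_i \neq \hat\theta_i \text{ and } \theta_i = \mathfrak{c}'(w)_i}\big|$; since $\theta$ is uniform and independent of everything $\dB$ knows at that point (in particular of $\hat\theta$, $\kappa$ and $w$), each of the more than $\delta n$ indices with $\mathfrak{c}'(w)_i \neq \hat\theta_i$ lands in $I_w$ independently with probability $\frac12$, so by Hoeffding's inequality~\cite{Hoeffding63} we get $d_H(\theta|_{I_w}, \hat\theta|_{I_w}) \geq (\frac12 - \eps)\delta n$ except with negligible probability, for any fixed $\eps > 0$. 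Plugging $I = I_w$ into~\eqref{eq:staterequirements} then gives $\Hmin\bigl(X|_{I_w} \,\big|\, X|_{\bar{I}_w} = x|_{\bar{I}_w}\bigr) \geq (\frac12 - \eps)\delta n - \beta n$ together with $\Hmax(\rho_{ER}) \leq \beta n$. Granting $\dB$ the values $x|_{\bar{I}_w}$, $f$, $\kappa$, $g$ and the whole register $ER$ for free, privacy amplification by (strongly) universal hashing against quantum side information shows that $f(X|_{I_w})$, and hence $z = f(X|_{I_w}) \oplus g(w)$, is statistically close to uniform and independent of $\dB$'s view, because the entropy surplus $(\frac12 - \eps)\delta n - \beta n - \beta n = (\frac12 - \eps)\delta n - 2\beta n$ exceeds the (short) output length of $f$ for $n$ large, whenever $\beta < \delta/4$ (and $\eps$ small enough). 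So replacing $z$ by the uniform string that $\dhB$ sends disturbs $\dB$'s output -- and thus the joint output -- only negligibly, and combining the two cases yields $out_{\A,\dB}^{\QID} \approxs out_{\hA,\dhB}^{\F_{\ID}}$. The only remaining care is to verify that the conditioning on $(\theta, \hat\theta, z, \kappa, w, w')$ is consistent and, in particular, that the event separating the two cases does not bias the distribution of $\theta$ -- which it does not, since $w$, $w'$, $\hat\theta$ and $\kappa$ are all determined before $\theta$ is revealed to $\dB$.
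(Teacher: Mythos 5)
Your proposal is correct and takes essentially the same route as the paper's own proof: the same simulator (decode $\hat\theta\oplus\kappa$, query $\F_{\ID}$, send the honest $z$ only when the guess is right and otherwise a random string), the same triangle-inequality argument that $\mathfrak{c}'(w)$ is at distance $>\delta n$ from $\hat\theta$ whenever $w\neq w'$, the same concentration bound giving $d_H(\theta|_{I_w},\hat\theta|_{I_w})\gtrsim \delta n/2$, and the same conclusion by privacy amplification under the constraint $\beta<\delta/4$. The only (harmless) deviation is that you send a random $z$ whenever the decoder fails, regardless of $y$, whereas the paper's simulator keys off $y$ alone; in the edge case where decoding fails yet the arbitrary $w_B$ happens to equal $w_A$ your argument still goes through because $\hat\theta\oplus\kappa$ is then $>\delta n$ from every codeword, including $\mathfrak{c}(w_A)$, so the honest $z$ is close to uniform anyway.
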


\begin{proof}
For any given benign Bob $\dB$, we construct $\dhB$ as follows. $\dhB$ runs locally a copy of $\dB$ and simulates Alice's actions by running $\A$ faithfully except for the following modifications. After the preparation phase, $\dhB$ gets $\hat{\theta}$ and $\kappa$ from $\dB$ and attempts to decode
$\hat{\theta} \oplus \kappa$. If this succeeds, it computes $w'$ such that $\mathfrak{c}(w')$ is the decoded codeword. Otherwise an arbitrary $w'$ is chosen.
Then, $\dhB$ submits $w'$ as Bob's input $w_B$ to $\F_{\ID}$ and receives output $y \in \set{0,1}$. If $y = 1$ then $\dhB$ faithfully completes $\A$'s simulation using $w'$ as $w$; else, $\dhB$ completes the simulation by using a random $z'$ instead of $z$. In the end, $\dhB$ outputs whatever $\dB$ outputs. 

We need to show that the state output by $\dhB$ (respectively $\dB$) above is statistically close to the state output by $\dB$ when executing $\QID$ with real $\A$. 
Note that if $w' = w_A$, then the simulation of $\A$ is perfect and thus the two states are equal. If $w' \neq w_A$ then the simulation is not perfect: the real $\A$ would use $z= f(x|_{I_{w_A}}) \oplus g(w_A)$ instead of random $z'$. It thus suffices to argue that $f(x|_{I_w})$ is statistically close to random and independent of the view of $\dB$ for any fixed $w \neq w'$. 
Note that this is also what had to be proven in~\cite{DFSS07}, but under a different assumption, namely that $\dB$ has bounded quantum memory, rather than that he is benign; nevertheless, we can recycle part of the proof. 

Recall from the definition of a benign Bob that the common state after the preparation phase is statistically close to a state for which it is guaranteed that $\Hoo(X|_I) \geq d_H(\theta|_I,\hat{\theta}|_I) - \beta n$ for any $I \subseteq \set{1,\ldots,n}$, and $\Hmax(\rho_{ER}) \leq \beta n$. 
By the closeness of these two states, switching from the real state to the ``ideal'' state (which satisfies these bounds) has only a negligible effect on the state output by $\dhB$; thus, we may assume these bounds to hold. 

Now, if decoding of $\hat{\theta}\oplus\kappa$ succeeded, it is at Hamming distance at most
$\delta n$ from $\mathfrak{c}(w')$. Since the distance from here to the (distinct) codeword
$\mathfrak{c}(w)$ is greater than $2\delta n$, we see that  $\hat{\theta}\oplus\kappa$ is at least
$\delta n$ away from $\mathfrak{c}(w)$. The same is true if decoding failed, since then
 $\hat{\theta}\oplus\kappa$ is at least $\delta n$ away from any codeword.
It follows that $\mathfrak{c}'(w) = \mathfrak{c}(w) \oplus \kappa$ has Hamming distance at least $\delta n$ 
from $\hat{\theta}$. Furthermore, for arbitrary $\eps > 0$ and except with negligible probability, the Hamming distance between $\theta|_{I_w} = \mathfrak{c}'(w)|_{I_w}$ and $\hat{\theta}|_{I_w}$ is at least essentially $(\delta/2 - \eps) n$. Therefore, we can conclude that $\Hoo(X|_{I_w}) \geq (\delta/2 - \eps  - \beta) n$ and $\Hmax(\rho_{ER}) \leq \beta n$. 
But now, if such bounds hold such that $\Hoo(X|_{I_w}) - \Hmax(\rho_{ER})$ is positive and linear in $n$, which is the case here by the choice of parameters, then we can step into the proof from~\cite{DFSS07} and conclude by privacy amplification~\cite{RK05} that $z$ is close to random and independent of $E$. 
This finishes the proof. 
\qed
\end{proof}
By combining Theorem~\ref{thm:ID} with Theorem~\ref{thm:Compiler}, and the results of~\cite{DFSS07} with Theorem~\ref{thm:BQSM}, we obtain the following hybrid-security result. 

\begin{corollary}
  Let $0<\alpha<1$ and $|{\cal W}| \leq 2^{\nu n}$. If the code
  $\mathfrak{c}:{\cal W} \rightarrow \set{+,\x}^n$ can correct
  $ \delta n$ errors for a constant  $\delta > 0$ in polynomial-time,
  then protocol $\compile(\QID)$ is computationally secure
  against dishonest Bob and unconditionally secure against $\gamma
  (1\!-\!\alpha)$-BQSM Bob with $\gamma < \frac{\delta}{2} - \nu$.
\end{corollary}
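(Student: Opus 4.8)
The plan is to obtain the corollary by feeding the two security properties of $\QID$ into the two compiler theorems already proved, so that no new quantum-information-theoretic argument is needed; the content is only in checking that $\QID$ meets the hypotheses with the claimed parameters.

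\textbf{Step 1 (computational security against dishonest Bob).} First I would observe that $\QID$ of Figure~\ref{fig:ident} is a BB84-type protocol in the sense of Figure~\ref{fig:BB84-type}: its preparation phase is precisely ``$\A$ sends $\ket{x}_\theta$ and $\B$ measures in a random basis $\hat\theta$'', and everything else is classical; the random shift $\kappa$, which is announced in the clear, is exactly the device that makes $\hat\theta$ uniform and does not affect security against either party (as already noted in the text). Next, unconditional security of $\QID$ against dishonest Alice follows from the analysis of~\cite{DFSS07} together with~\cite{FS09}. Since $\delta>0$ is a constant, fix any constant $\beta$ with $0<\beta<\delta/4$; by Theorem~\ref{thm:ID}, $\QID$ is unconditionally secure against $\beta$-benign Bob. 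Now Theorem~\ref{thm:Compiler} applies directly and yields that $\compile(\QID)$ is computationally secure against dishonest Bob in the CRS model. I would additionally stress that the ideal-world adversary obtained this way is polynomial-time, as Definition~\ref{def:polyboundedBobCRS} demands: the simulator built in the proof of Theorem~\ref{thm:ID} only runs $\dB$, simulates $\A$, and performs a single decoding of $\hat\theta\oplus\kappa$ — which is efficient precisely because $\mathfrak{c}$ corrects $\delta n$ errors in polynomial time, and this is why that clause appears in the hypothesis.

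\textbf{Step 2 (bounded-quantum-storage security).} Here I would recall that~\cite{DFSS07} proves the original password-identification protocol unconditionally secure against a quantum-memory-bounded dishonest Bob, with memory threshold $\gamma n$ for any constant $\gamma<\delta/2-\nu$ when $\mathfrak{c}$ corrects $\delta n$ errors and $|\mathcal{W}|\leq 2^{\nu n}$; the same proof applies verbatim to $\QID$, since the extra shift $\kappa$ is public and classical. Thus $\QID$ is unconditionally secure against $\gamma$-BQSM Bob for every such $\gamma$. Applying Theorem~\ref{thm:BQSM} then gives that $\compile(\QID)$ is unconditionally secure against $\gamma(1-\alpha)$-BQSM Bob: the attacker $\dB_\circ$ reconstructed from $\dB$ uses the same amount of quantum storage while $\QID$ transmits only a $(1-\alpha)$-fraction of the qubits. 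Combining Steps~1 and~2 proves the corollary.

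\textbf{Main obstacle.} The difficulty is bookkeeping rather than conceptual. One must check that the security proof of~\cite{DFSS07}, re-run with the parameters expressed through the code's error-correction radius $\delta$ and the password-set exponent $\nu$, produces exactly the bound $\gamma<\delta/2-\nu$ quoted here: that proof uses privacy amplification against quantum side information, so what is really needed is that $\Hoo(X|_{I_w})-\Hmax(\rho_E)$ stays linear in $n$, and $\gamma<\delta/2-\nu$ is precisely the inequality encoding this — and one must confirm that this analysis genuinely carries over to the shifted protocol $\QID$. One should also verify that the ``$\dhB$ polynomial in $\dB$'' clause of Definition~\ref{def:BenignBob} is met by the Theorem~\ref{thm:ID} simulator (again hinging on efficient decodability of $\mathfrak{c}$), so that Theorem~\ref{thm:Compiler} does deliver the efficient simulator that Definition~\ref{def:polyboundedBobCRS} requires.
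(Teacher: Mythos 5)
Your proposal is correct and takes exactly the approach the paper intends: the paper states this corollary as an immediate consequence of combining Theorem~\ref{thm:ID} with Theorem~\ref{thm:Compiler} (computational security) and the results of~\cite{DFSS07} (via~\cite{FS09}) with Theorem~\ref{thm:BQSM} (BQSM security). Your additional checks — that $\delta>0$ gives a valid constant $\beta<\delta/4$, and that the poly-time decodability of $\mathfrak{c}$ is what makes the benign simulator (and hence the ideal-world adversary from Theorem~\ref{thm:Compiler}) polynomial-time as Definition~\ref{def:polyboundedBobCRS} requires — are correct and simply make explicit what the paper leaves implicit.
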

Families of codes as required in these results, correcting a constant fraction of errors efficiently and with constant information rate are indeed known, see \cite{SS96}.

In the next section, we briefly discuss
how to obtain
hybrid security against {\em man-in-the-middle} attacks by means of
incorporating the techniques used in~\cite{DFSS07} to obtain security
in the BQSM against such attacks.

 \subsection{Protecting against Man-in-the-middle Attacks}\label{sec:MIM}

 The compiled quantum protocols from Sections~\ref{sec:QOT} and \ref{sec:QID} protect
 against (arbitrary) dishonest Alice and against (computationally or
 quantum-storage bounded) dishonest Bob. However, in particular in the
 context of identification, it is also important to protect against a
 {\em man-in-the-middle} attacker, Eve, who attacks an execution of the
 protocol with honest parties $\A$ and $\B$ while having full control
 over the classical and the quantum communication. Both, $\QID$ and
 $\compile(\QID)$, are insecure in this model: Eve might measure one of
 the transmitted qubits, say, in the $+$ -basis, and this way learn
 information on the basis $\hat{\theta}_i$ used by $\B$ and thus on the
 password $w$ simply by observing if $\B$ accepts or rejects in the
 end.

 In~\cite{DFSS07} it was shown how to enhance $\QID$ in order to obtain
 security (in the bounded-quantum-storage model) against
 man-in-the-middle attacks. The very same techniques can also be used
 to obtain {\em hybrid security} against man-in-the-middle attacks for
 $\compile(\QID)$. The techniques from~\cite{DFSS07} consist of the
 following two add-on's to the original protocol. (1) Checking of a
 random subset of the qubits in order to detect disturbance of the
 quantum communication; note that $\compile(\QID)$ already does such a
 check, so this is already taken care of here. And (2) authentication of
 the classical communication. This requires that Alice and Bob, in
 addition to the password, share a high-entropy key $k$ that could be
 stored, e.g., on a smart-card.  This key will be used for a so-called
 {\em extractor MAC} which has the additional property, besides being a
 MAC, that it also acts as an extractor, meaning if the message to be
 authenticated has high enough min-entropy, then the key-tag pair is
 close to randomly and independently distributed. As a consequence, the
 tag gives away (nearly) no information on $k$ and thus $k$ can be
 re-used in the next execution of the protocol.%
 \footnote{This is in
   contrast to the standard way of authenticating the classical
   communication, where the authentication key can only be used a
   bounded number of times. }

 Concretely, in order to obtain hybrid-security against
 man-in-the-middle attacks for $\compile(\QID)$, $\A$ will, in her last
 move of the protocol, use an extractor MAC to compute and send to $\B$
 an authentication tag, computed on all the classical messages
 exchanged plus the string $x|_{I_w}$. This tag, together with the
 qubit checks, prevents Eve from interfering with the (classical and
 quantum) communication without being detected, and security against
 Eve essentially follows from the security against impersonation
 attacks.  Note that including the $x|_{I_w}$ into the authenticated
 message guarantees the necessary min-entropy, and as such the
 re-usability of the key $k$.

We emphasize that the protocol is still secure against impersonation attacks (i.e. dishonest Alice or Bob) even if the adversary knows $k$. We omit formal proofs since they literally follow the corresponding proofs in~\cite{DFSS07}.

\section*{Acknowledgments}
We thank Dominique Unruh for useful comments about the efficiency of
the simulators.

SF is supported by the Dutch Organization for Scientific Research (NWO). CL acknowledges financial support by the MOBISEQ research project funded by NABIIT, Denmark. LS is supported by the QUSEP project of the Danish Natural Science Research Council, and by the QuantumWorks Network. CS acknowledges support by EU fifth framework project QAP IST 015848 and a NWO VICI grant 2004-2009.

\addcontentsline{toc}{chapter}{Bibliography} 
\bibliographystyle{alpha}
\bibliography{crypto,qip,procs}

\appendix

\AppendixOnOff{

\section{Sequential Composition Theorem for Computational Security}
\label{app:composition}

In this appendix, we show that our new
Definition~\ref{def:polyboundedBobCRS} of computational security
allows for sequential composability in a classical environment. In
order to state the composition theorem, we need to define what we mean
by running a quantum protocol in a classical environment. Again, we
give here a brief summary of the setting from~\cite{FS09} and refer
the interested reader to the original article for further details.

A classical two-party {\em oracle protocol}%
\footnote{In~\cite{FS09}, the
  more standard term \emph{hybrid protocol} is used, but as this term
  is used differently in this paper, we avoid it here in the context
  of composability.} $\Sigma^{\F_1\cdots\F_\ell} =
(\hA,\hB)$ between Alice and Bob is a protocol which makes a bounded
number $k$ of sequential oracle calls to possibly different ideal
functionalities $\F_1,\ldots,\F_\ell$.

For the oracle protocol to be {\em classical}, we mean that it has
classical in- and output (for the honest players), but also that all
communication between Alice and Bob is classical.%
\footnote{We do not
  explicitly require the internal computations of the honest parties
  to be classical. } 
Consider a dishonest player, say Bob, and consider the common state
\smash{$\rho_{U_j V'_j}$} at any point during the execution of the
oracle protocol when a call to functionality $\F_i$ is made. The
requirement for the oracle protocol to be \emph{classical} is now
expressed in that there exists a classical $Z_j$---to be understood as
consisting of $\dhB$'s classical communication with $\hA$ and with the
$\F_{i'}$'s up to this point---such that given $Z_j$, Bob's quantum
state $V'_j$ is not entangled with Alice's classical input and
auxiliary information: \smash{$\rho_{U_j Z_j V'_j} = \rho_{\MC{U_j}{Z_j}{V'_j}}$}.  Furthermore, we require that we may assume
$Z_j$ to be part of $V'_j$ in the sense that for any $\dhB$ there
exists $\dhB'$ such that $Z_j$ is part of $V'_j$. This definition is
motivated by the observation that if Bob can communicate only
classically with Alice, then he can entangle his quantum state with
information on Alice's side only by means of the classical
communication.

We also consider the protocol we obtain by replacing the ideal
functionalities by quantum two-party sub-protocols
$\pi_1,\ldots,\pi_\ell$ with classical in- and outputs for the honest
parties: whenever $\Sigma^{\F_1\cdots\F_\ell}$ instructs $\hA$ and
$\hB$ to execute ${\F_i}_{\hA,\hB}$, they instead execute $\qp_i$ and
take the resulting outputs. We write $\cp^{\qp_1\cdots\qp_\ell} =
(\A,\B)$ for the real quantum protocol we obtain this way. 

We recall that in order to define computational security against a
dishonest Bob in the common-reference-string model, we considered a
polynomial-size quantum circuit, called \emph{input sampler}, which
takes as input the security parameter $m$ and the CRS $\crs$ (chosen
according to its distribution) and which produces the input state
$\rho_{U Z V'}$; $U$ is Alice's classical input to the protocol, and
$Z$ and $V'$ denote the respective classical and quantum information
given to dishonest Bob. We require from the input sampler that
$\rho_{U ZV'} = \rho_{\MC{U}{Z}{V'}}$, i.e., that $V'$ is correlated
with Alice's part only via the classical~$Z$. When considering
classical hybrid protocols $\cp^{\qp_1\cdots\qp_\ell}$ in the real
world, where the oracle calls are replaced with quantum protocols
using a common reference string, it is important that every real
protocol $\pi_i$ uses a separate instance (or part) of the common
reference string which we denote by $\crs_i$.

\begin{theorem}[Sequential Composition] \label{thm:Comp}
  Let $\Sigma^{\F_1\cdots\F_\ell} = (\hA,\hB)$ be a classical
  two-party oracle protocol which makes at most $k=\poly(n)$ oracle
  calls to the functionalities, and for every $i \in
  \set{1,\ldots,\ell}$, let protocol $\pi_i$ be a computationally
  secure implementation of $\F_i$ against $\dBobPoly$. 

  Then, for every real-world adversary $\dB \in \dBobPoly$ who
  accesses the common reference string $\crs=\crs_1,\ldots,
  \crs_k$ there exists an ideal-world adversary $\dhB \in \dBobPoly$
  who does not use $\crs$ such that for every efficient input sampler, it
  holds that 
  the outputs in the real and ideal world are q-indistinguishable,
  i.e.~
$$
out_{\A,\dB}^{\Sigma^{\pi_1\cdots\pi_\ell}} \approxq out_{\hA,\dhB}^{\Sigma^{\F_1\cdots\F_\ell} }
$$
\end{theorem}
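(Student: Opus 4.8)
The plan is to prove the sequential composition theorem by a standard hybrid argument over the $k$ oracle calls, leveraging the fact that each sub-protocol $\pi_i$ is a secure implementation of $\F_i$ in the sense of Definition~\ref{def:polyboundedBobCRS}. First I would set up notation: for $j = 0,1,\ldots,k$, let $H_j$ denote the ``hybrid execution'' in which the first $j$ oracle calls encountered during the run of $\Sigma$ are handled by the real quantum sub-protocols $\pi_i$ (with their own CRS parts $\crs_i$), while the remaining $k-j$ calls are handled by the ideal functionalities $\F_i$. Then $H_k = out_{\A,\dB}^{\Sigma^{\pi_1\cdots\pi_\ell}}$ is the full real-world execution and $H_0 = out_{\hA,\dhB_0}^{\Sigma^{\F_1\cdots\F_\ell}}$ is the full ideal-world execution for an appropriate adversary. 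The goal is to show $H_k \approxq H_0$, and by the triangle inequality for the distinguishing advantage it suffices to show $H_{j-1} \approxq H_j$ for each $j$, together with the fact that $k = \poly(n)$ so that summing $k$ negligible advantages stays negligible.

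The core of the argument is the single-step indistinguishability $H_{j-1} \approxq H_j$. Here I would fix the adversary $\dB$ attacking the hybrid and argue as follows. In $H_j$, the $j$-th oracle call is executed as the real protocol $\pi_{i}$ (for the appropriate index $i$); in $H_{j-1}$ it is executed as the ideal call to $\F_{i}$. Everything before the $j$-th call is identical in both hybrids, so the common state $\rho_{U_j Z_j V'_j}$ at the moment the $j$-th call begins is the same; crucially, by the ``classical oracle protocol'' assumption, this state satisfies $\rho_{U_j Z_j V'_j} = \rho_{\MC{U_j}{Z_j}{V'_j}}$, i.e.\ it is a legitimate input state for the sub-protocol, and $Z_j$ may be assumed part of $V'_j$. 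Hence the state entering the $j$-th call is exactly of the form a legitimate efficient input sampler would produce (here I would note that since $\dB$ is polynomial-time and the environment/protocol $\Sigma$ is classical and polynomial, the whole prefix, including the generation of this state from the CRS, is an efficient legitimate input sampler). Now I would invoke the security of $\pi_i$: there exists $\dhB_i \in \dBobPoly$ not using $\crs_i$ such that the output of the real execution of $\pi_i$ on this input is q-indistinguishable from the output of the ideal call to $\F_i$ with $\dhB_i$. To turn this into $H_{j-1}\approxq H_j$, I would build a reduction: any polynomial-time distinguisher $D$ telling $H_{j-1}$ from $H_j$ yields a polynomial-time distinguisher against $\pi_i$, namely the one that runs the prefix of $\Sigma$ (playing the role of the input sampler), then receives the output of either the real or ideal sub-protocol, then runs the suffix of $\Sigma$ (the remaining calls, all ideal, plus Alice's honest classical steps) and feeds the final joint state to $D$. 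Since the suffix is efficient, this reduction is efficient, so $D$'s advantage is negligible. Composing the $\dhB_i$'s across all $j$ (each replacing one real sub-protocol call by its ideal simulator while leaving the classical glue of $\Sigma$ untouched) yields the global ideal-world adversary $\dhB$, which is polynomial-time because it is a composition of $k = \poly(n)$ polynomial-time pieces, and which does not use $\crs$ because none of the $\dhB_i$ uses its $\crs_i$ and $\Sigma$ itself does not use the CRS.

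The main obstacle I anticipate is the bookkeeping around the input-sampler condition and the legitimacy of the intermediate states: I must make sure that at each oracle call the common state really is of the restricted form $\rho_{\MC{U_j}{Z_j}{V'_j}}$ demanded by Definition~\ref{def:polyboundedBobCRS}, and that it is producible by an \emph{efficient} sampler from the CRS. This is exactly why the theorem restricts to \emph{classical} oracle protocols: the classicality of all communication in $\Sigma$ forces Bob's quantum register to be correlated with Alice's data only through the classical transcript $Z_j$, which is precisely the legitimacy condition. A secondary subtlety is that the simulator $\dhB_i$ for the $j$-th call must be composed correctly with the rest of $\Sigma$ — in particular that $\dhB_i$ chooses $\crs_i$ itself (as an ideal-world adversary is entitled to do) and that this choice does not interfere with the other sub-protocols' independent CRS parts $\crs_{i'}$; using separate instances $\crs_1,\ldots,\crs_k$ of the reference string is what makes this clean. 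I would also remark, following the discussion in the excerpt, that one has to be slightly careful that replacing an ideal call by a real sub-protocol does not blow up the running time super-polynomially across $k$ rounds, but since $k=\poly(n)$ and each $\pi_i$ and each $\dhB_i$ is polynomial, the total stays polynomial. The rest is routine: collect the $k$ negligible bounds, apply the triangle inequality, and conclude $out_{\A,\dB}^{\Sigma^{\pi_1\cdots\pi_\ell}} \approxq out_{\hA,\dhB}^{\Sigma^{\F_1\cdots\F_\ell}}$.
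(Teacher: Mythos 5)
Your overall strategy---a hybrid argument over the $k$ oracle calls, invoking single-call security of $\pi_i$ once per call, relying on the classicality of $\Sigma$ for legitimacy of intermediate states, and collecting $k$ negligible errors---is the right shape and matches the paper, which proves the same thing by induction on the last call. However, the \emph{direction} of your hybrids creates a genuine gap. You define $H_j$ as ``first $j$ calls real, rest ideal'' and compare $H_j$ with $H_{j-1}$, so the common prefix feeding the $j$-th call consists of \emph{real} sub-protocol executions. You then assert that ``by the `classical oracle protocol' assumption, this state satisfies $\rho_{U_j Z_j V'_j} = \rho_{\MC{U_j}{Z_j}{V'_j}}$.'' That is not what the classicality assumption gives you. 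The Markov condition is part of what it means for the \emph{oracle} protocol $\Sigma^{\F_1\cdots\F_\ell}$ to be classical---it is a property of the state during an execution with \emph{ideal} calls, where all Alice--Bob communication is classical. Once you substitute a real quantum sub-protocol $\pi_{i}$ for one of the earlier calls, Bob can retain a quantum register that is correlated with Alice's classical data in a way not explained by any classical $Z_j$ (e.g.\ by delaying measurement of received qubits). Security of $\pi_i$ only guarantees that the post-call state is \emph{computationally close} to a legitimate state, not that it \emph{is} one; so Definition~\ref{def:polyboundedBobCRS} cannot be invoked directly on the real-prefix state, and your ``efficient legitimate input sampler'' claim fails.

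The paper's inductive proof avoids exactly this pitfall: the induction hypothesis yields $\rho_{\bar U \bar Z \bar V'} \approxq \sigma_{\bar U \bar Z \bar V'}$, where $\sigma$ is the state just before the $k$-th call in the \emph{oracle} protocol run against the already-constructed ideal adversary $\dhB$; single-call security is then applied to $\sigma$, which is legitimate and efficiently samplable, and the real prefix only ever enters through the $\approxq$ step $(\pi_{\A,\dB})\rho \approxq (\pi_{\A,\dB})\sigma$. Equivalently, the ``right'' flat hybrid sequence is $G_j = \text{Ideal}(1..j)+\text{Real}(j{+}1..k)$, so that the prefix feeding the call being swapped is always ideal. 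Your argument can be repaired by flipping the hybrid orientation (or phrasing it as the paper's induction), at which point your remaining bookkeeping---efficiency of the reduction, independence of the per-call simulators from the input state, separate CRS parts, and $k=\poly(n)$---is sound. The rest of your proposal (triangle inequality over $k$ hybrids, composing the $\dhB_i$'s, $\dhB$ not using the CRS) is consistent with the paper.
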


Note that we do not specify what it means for the oracle protocol
to be secure; Theorem~\ref{thm:Comp} guarantees that {\em whatever}
the oracle protocol achieves, an indistinguishable output is produced
by the real-life protocol with the oracle calls replaced by protocols.
But of course in particular, if the oracle protocol {\em is} secure in
the sense of Definition~\ref{def:polyboundedBobCRS}, then so is the
real-life protocol:

\begin{corollary}
  If $\Sigma^{\F_1\cdots\F_\ell}$ is a computationally secure
  implementation of $\cal G$ against $\dBobPoly$, and if $\pi_i$ is a
  computationally secure implementation of $\F_i$ against $\dBobPoly$
  for every $i \in \set{1,\ldots,\ell}$, then
  $\Sigma^{\pi_1\cdots\pi_\ell}$ with at most $k=\poly(n)$ oracle
  calls is a computationally secure implementation of~$\cal G$ against
  $\dBobPoly$.
\end{corollary}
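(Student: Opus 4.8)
The plan is to obtain the Corollary by chaining Theorem~\ref{thm:Comp} with the assumed computational security of the outer oracle protocol, and then checking that quantum-computational indistinguishability survives the (constant-length) chain. First I would fix an arbitrary real-world adversary $\dB \in \dBobPoly$ attacking $\Sigma^{\pi_1\cdots\pi_\ell}$ and accessing the common reference string $\crs = \crs_1,\ldots,\crs_k$, together with an arbitrary efficient legitimate input sampler. Since every $\pi_i$ computationally securely implements $\F_i$ against $\dBobPoly$ and the oracle protocol makes at most $k = \poly(n)$ calls, Theorem~\ref{thm:Comp} provides an ideal-world adversary $\ddhB \in \dBobPoly$, not using $\crs$, with
$$
out_{\A,\dB}^{\Sigma^{\pi_1\cdots\pi_\ell}} \approxq out_{\hA,\ddhB}^{\Sigma^{\F_1\cdots\F_\ell}} \, .
$$

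Next I would regard $\Sigma^{\F_1\cdots\F_\ell}$ itself as a two-party protocol, with the oracle calls treated as atomic steps, and apply the hypothesis that it is a computationally secure implementation of $\cal G$ against $\dBobPoly$ in the sense of Definition~\ref{def:polyboundedBobCRS}. Since $\ddhB$ is a polynomial-time adversary in the real world of $\Sigma^{\F_1\cdots\F_\ell}$, instantiating that definition with $\ddhB$ and the same input sampler yields an ideal-world adversary $\dhB \in \dBobPoly$ attacking $\cal G$, again not using $\crs$, with
$$
out_{\hA,\ddhB}^{\Sigma^{\F_1\cdots\F_\ell}} \approxq out_{\hA,\dhB}^{\cal G} \, .
$$
Combining the two relations by transitivity of $\approxq$ over two hybrids --- the composed distinguisher is still polynomial-time, and the sum of two negligible advantages is negligible --- gives $out_{\A,\dB}^{\Sigma^{\pi_1\cdots\pi_\ell}} \approxq out_{\hA,\dhB}^{\cal G}$, which is precisely what Definition~\ref{def:polyboundedBobCRS} demands for $\Sigma^{\pi_1\cdots\pi_\ell}$ to implement $\cal G$.

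Two bookkeeping points deserve a line of care. The CRS is split as $\crs = \crs_1,\ldots,\crs_k$ with a fresh piece per sub-protocol instance, exactly as Theorem~\ref{thm:Comp} requires, while the ideal-world adversaries $\ddhB$ and $\dhB$ use no CRS at all, so composing them causes no conflict. Likewise, the legitimacy condition $\rho_{UZV'} = \rho_{\MC{U}{Z}{V'}}$ on the input sampler is preserved along the chain, because the same sampler is used at every step and both Theorem~\ref{thm:Comp} and Definition~\ref{def:polyboundedBobCRS} quantify over exactly such samplers. The one substantial ingredient is Theorem~\ref{thm:Comp} itself, which is assumed here; granting it, the Corollary reduces to the immediate two-step argument above, and there is no real obstacle beyond the routine transitivity/efficiency check just described.
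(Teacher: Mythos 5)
Your proposal is correct and follows the same route as the paper: the paper presents the corollary as an immediate consequence of Theorem~\ref{thm:Comp}, and the explicit content of that ``immediate consequence'' is precisely the two-step hybrid chain you describe---apply Theorem~\ref{thm:Comp} to obtain $\ddhB$, then apply the assumed computational security of $\Sigma^{\F_1\cdots\F_\ell}$ to $\ddhB$ (viewed as a polynomial-time real-world adversary against the oracle protocol, which uses no CRS) to obtain $\dhB$, and close by transitivity of $\approxq$. Your side remarks about the per-subprotocol CRS pieces, the legitimacy of the input sampler, and the polynomial efficiency of the composed distinguisher are exactly the bookkeeping that justifies the chain.
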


\def\mark#1{\bar{#1}}

 The following proof is an adaptation of the
 sequential-composability proof in the information-theoretical setting
 given in~\cite{FS09}.

\begin{proof}[of Theorem~\ref{thm:Comp}]
Consider a dishonest $\dB \in \dBobPoly$. 
We prove the claim by induction on $k$. If no oracle calls are made,
we can set $\dhB \assign \dB$ and the claim holds trivially. Consider
now a protocol $\Sigma^{\F_1\cdots\F_\ell}$ with at most $k > 0$
oracle calls.  For simplicity, we assume that the number of oracle
calls equals $k$, otherwise we instruct the players to makes some
``dummy calls''.  Let $\rho_{U_k Z_k V'_k}$ be the common state right
before the $k$-th and thus last call to one of the sub-protocols
$\pi_1,\ldots,\pi_\ell$ in the execution of the real protocol
$\Sigma^{\pi_1,\ldots,\pi_\ell}$. To simplify notation in the rest of
the proof, we omit the index $k$ and write \smash{$\rho_{\mark{U}
    \mark{Z} \mark{V}'}$} instead; see Figure~\ref{fig:compproof}. We
know from the induction hypothesis for $k-1$ that there exists an
ideal-world adversary $\dhB \in \dBobPoly$ not using the common
reference string such that $\rho_{\mark{U} \mark{Z} \mark{V}'}
\approxq \sigma_{\mark{U} \mark{Z} \mark{V}'}$ where $\sigma_{\mark{U} \mark{Z} \mark{V}'}$ is the common state right before the
$k$-th call to a functionality in the execution of the oracle protocol
$\Sigma^{\F_1\cdots\F_\ell}_{\hA,\dhB} \rho_{U Z V'}$. As described
at the begin of this section, $\mark{U}$ and
$\mark{Z},\mark{V}'$ are to be understood as follows.
$\mark{U}$ denotes $\A$'s (respectively $\hA$'s) input
to the sub-protocol (respectively functionality) that is to be called
next. $\mark{Z}$ collects the classical communication dictated by
$\Sigma^{\F_1\ldots,\F_\ell}$ as well as $\dhB$'s classical inputs to
and outputs from the previous oracle calls and $\mark{V}'$ denotes the
dishonest player's current quantum state. Note that the existence of
$\mark{Z}$ is guaranteed by our formalization of {\em classical}
oracle protocols and $\sigma_{\mark{U} \mark{Z} \mark{V}'} =
\sigma_{\MC{\mark{U}}{\mark{Z}}{\mark{V}'}}$.

Let $\crs_i$ be the common reference string used in protocol $\pi_i$.
For simplicity, we assume
that the index $i$, which determines the sub-protocol $\pi_i$
(functionality~$\F_i$) to be called next, is {\em fixed}
and we just write $\pi$ and $\F$ for $\pi_i$ and $\F_i$,
respectively. 

\begin{figure}
 \centering 
 { \input{ps/CompProofFig.pstex_t} }
 \caption{Steps of the Composability Proof} 
 \label{fig:compproof} 
\end{figure}

It follows from Definition~\ref{def:polyboundedBobCRS} of computational
security that there exists $\dhB \in \dBobPoly$ (independent of the
input state) not using $\crs_i$ such that the corresponding output states
$\sigma_{\mark{X}\mark{Z}\mark{Y}'}$ and
$\tau_{\mark{X}\mark{Z}\mark{Y}'}$ produced by $\F_{\hA,\dhB}$
(as prescribed by the oracle protocol) and $\pi_{\A,\dB}$ run on the
state $\sigma_{\mark{U} \mark{Z} \mark{V}'} =
\sigma_{\MC{\mark{U}}{\mark{Z}}{\mark{V}'}}$ are
q-indistinguishable.

The induction step is then completed as follows.
\[ out_{\A,\dB}^{\Sigma^{\pi}} = \rho_{\mark{X}\mark{Z} \mark{Y}'} = (\pi_{\A,\dB})\,
\rho_{\mark{U}\mark{Z}\mark{V}'} \approxq
(\pi_{\A,\dB})\,\sigma_{\mark{U}\mark{Z}\mark{V}'} =
\sigma_{\mark{X}\mark{Z}\mark{Y}'} \approxq
\tau_{\mark{X}\mark{Z}\mark{Y}'} = out_{\hA,\dhB}^{\Sigma^{\F}}
\]

Note that the strategy of $\dhB$ does not depend on the state
$\sigma_{\mark{U}\mark{Z}\mark{V}'}$, and hence, the overall
ideal-world adversary $\dhB$ does not depend on the input state
either. Furthermore, the concatenation of two polynomially bounded
players is polynomially bounded, i.e.~$\dhB \in \dBobPoly$.

\qed
\end{proof}

}

\end{document}